\documentclass{JFM-FLM_Au}

\usepackage{xcolor}
\usepackage{tikz}
\usetikzlibrary{calc,arrows.meta}
\usepackage{subcaption}

\definecolor{jfmblue}{RGB}{0,56,168}

\hypersetup{
  colorlinks=true,
  citecolor=blue,
  linkcolor=jfmblue,
  urlcolor=jfmblue
}

\newtheorem{proposition}{Proposition}

\newenvironment{proof}
  {\par\noindent\textit{Proof.}\ }
  {\hfill$\Box$\par}

\newcounter{assumption}
\renewcommand{\theassumption}{A\arabic{assumption}}

\lefttitle{Sharma}
\righttitle{Journal of Fluid Mechanics}

\title{Strain-coupled one-dimensional turbulence for rapid distortion}

\author{Sparsh Sharma\aff{1}}

\affiliation{\aff{1}German Aerospace Center (DLR), 38108 Braunschweig, Germany}

\corresau{Sparsh Sharma, sparsh.sharma@dlr.de}

\begin{document}
\maketitle

\begin{abstract}
The distortion of turbulence by a mean strain---amplifying some components,
redistributing energy through the fluctuating pressure, and rescaling the
spectrum---is central to flows from wind-tunnel contractions to the stagnation
regions of lifting surfaces. Capturing it economically is difficult:
scale-resolving simulation is costly, linear rapid-distortion theory omits the
nonlinear relaxation acting over finite strain, and second-moment closures
discard the spectral information the distortion redistributes. We present a
strain-coupled formulation of one-dimensional turbulence (ODT) that evolves
strained turbulence with full scale resolution at negligible cost. The
mean-strain production is imposed as a continuous forcing of the line velocity,
shown to preserve the structural invariants of the triplet-map mechanism; the
rapid pressure--strain enters as an energy-conserving redistribution operator
constructed to reproduce the homogeneous rapid-distortion evolution of the
component energies exactly under an explicit spectral closure; and the straining
of scales is carried by a dilatation of the domain. Two results follow. First,
the model produces a broadband distorted spectrum departing measurably from the
rigid translation linear theory predicts---a difference of spectral shape no
linearly strained input can represent---at a strain-to-turbulence ratio
$\chi=Sk_t/\varepsilon\approx0.8$, the intermediate regime the applications
occupy. Second, the rapid pressure--strain term, exactly a nonlocal
three-dimensional spectral integral, collapses to a closed single-line
functional under one physically grounded assumption, axisymmetry of the
undistorted turbulence about the line. The formulation reproduces rapid-distortion
theory exactly at onset and the anisotropy of the direct numerical
simulations of Lee \& Reynolds (1985) quantitatively; external validation of the
distorted spectrum, and its coupling to airfoil--turbulence interaction noise,
are deferred to a companion paper.
\end{abstract}

\begin{keywords}
Authors should not enter keywords on the manuscript, as these must be chosen by the author during the online submission process and will then be added during the typesetting process (see \href{https://www.cambridge.org/core/services/aop-file-manager/file/61436b61ff7f3cfab749ce3a/JFM-Keywords-Sept-2021.pdf.}{Keyword PDF} for the full list).  Other classifications will be added at the same time.
\end{keywords}

\section{Introduction}\label{sec:intro}

The distortion of turbulence by a mean velocity gradient is among the most
basic and far-reaching problems in turbulence dynamics. When a turbulent flow
is strained---accelerated through a contraction, deflected around a bluff body,
sheared in a boundary layer, or compressed across a shock---the mean
deformation reorganises the fluctuating field: it amplifies some velocity
components and suppresses others, redistributes energy among them through the
fluctuating pressure, and rescales the spectral content of the motion. When the
strain is applied over a time short compared with the eddy-turnover time of the
energy-containing scales, the nonlinear interactions among eddies cannot keep
pace and the distortion is governed, to leading order, by the linear dynamics
of rapid distortion. This regime, first analysed by
\citet{BatchelorProudman1954} and \citet{TownsendBook} and developed for flow
around obstacles by \citet{Hunt1973} and \citet{Goldstein1978}, underlies a
broad range of engineering and natural flows: the relaminarisation and
anisotropy of grid turbulence passing through wind-tunnel contractions
\citep{MillsCorrsin1959,TuckerReynolds1968,LeeReynolds1985}, the amplification
of turbulence across shock waves in compressible aerodynamics
\citep{Ribner1953}, the response of atmospheric turbulence to
terrain and to bluff obstacles \citep{HuntCarruthers1990,Britter1981}, and the
distortion of oncoming turbulence as it approaches the stagnation region of a
lifting surface \citep{HuntGraham1978,GoldsteinAtassi1976}. In each case the
turbulence that emerges from the strained region differs---in intensity,
anisotropy and spectral shape---from the turbulence that entered it, and that
difference is often the quantity of practical interest.

Capturing this distortion at a cost compatible with parametric study is,
however, difficult. Scale-resolving simulation---direct numerical simulation
\citep{LeeReynolds1985} or large-eddy simulation---represents the
distortion directly but at a cost that precludes broad sweeps across turbulence
intensity, length-scale ratio and Reynolds number. Linear rapid-distortion
theory (RDT) provides an analytical description of the mean-strain distortion
\citep{HuntCarruthers1990,Sagaut2008} and is exact in the rapid limit, but it is
linear by construction: in its tractable homogeneous form it neglects the
nonlinear inter-scale transfer and the slow return-to-isotropy that act over any
finite strain duration, so it cannot be relied upon once the accumulated strain
is moderate. Second-moment (Reynolds-stress) closures restore the nonlinear
relaxation but operate only at the level of single-point moments, discarding the
spectral and scale information that the distortion redistributes
\citep{LaunderReeceRodi1975,Pope2000}. What is missing, between the prohibitive
cost of scale-resolving simulation and the missing physics or missing scales of
the analytical and moment-level descriptions, is a reduced-order model that
\emph{evolves} strained turbulence with full scale resolution, retaining both
the rapid distortion and the nonlinear relaxation, at negligible cost.

A concrete and demanding instance of this general problem---and the application
that motivates the present development---is the prediction of
turbulence--airfoil interaction noise. When a lifting surface encounters
oncoming turbulence, the unsteady upwash induces a fluctuating loading that
radiates as broadband sound; for integral scales large compared with the
leading-edge radius the source is concentrated at the leading edge, a dominant
noise mechanism in axial fans, contra-rotating rotors, wind-turbine and
propeller blades, and airframe components in disturbed inflow. The prevailing
predictive framework, the analytical theory of \citet{Amiet1975}---which
combines the gust-response solution of \citet{Sears1941} with the
surface-source formulation of \citet{Curle1955} within the moving-surface
analogy of \citet{FfowcsWilliams1969}---obtains the radiated spectrum as a
linear functional of the wavenumber--frequency spectrum of the upwash velocity
component of the incoming turbulence. The framework has been extended to finite
chord, span and realistic geometry and validated against grid-turbulence
experiments \citep{PatersonAmiet1977,RogerMoreau2010}, but in every case
\emph{the upwash spectrum that enters the theory is prescribed, not computed},
taken from an idealised isotropic model spectrum---commonly the
von~K\'arm\'an form \citep{vonKarman1948}---fixed by the upstream intensity and
integral scale. Yet the turbulence reaching the leading edge is precisely
strained turbulence: as the flow approaches the stagnation region the mean
gradient distorts the incident eddies, rendering the upwash spectrum
\emph{seen by the edge} anisotropic, scale-dependent and tensorially
redistributed relative to its upstream counterpart, while the body additionally
imposes a kinematic blocking near the surface \citep{HuntGraham1978}. This
pre-impact distortion is the same mean-strain distortion described above,
occurring in the stagnation region; the standard prediction chain omits it,
and because the radiated spectrum is linear in the upwash input, that omission
sets a floor on the fidelity of the prediction. A model that supplies the
distorted spectrum is therefore of direct value to this application, and it is
the distortion problem, not the acoustic stage, that is the subject of this
paper.

Among reduced-order representations of turbulence, two further families bear on
this gap. Synthetic-turbulence methods---synthetic-eddy and random-particle
techniques---efficiently reproduce a \emph{prescribed} spectrum and coherence
and are now routine for generating aeroacoustic and inflow boundary conditions
\citep{KimHaeri2015}, but by construction they impose the spectrum rather than
evolving it, and so cannot themselves predict how it is modified under strain.
Map-based and stochastic one-dimensional models, by contrast, evolve a
resolved field: the linear-eddy model \citep{Kerstein1991} and the
one-dimensional turbulence (ODT) model \citep{Kerstein1999} carry the full range
of scales along a notional line and have proven able to predict spectra and
component statistics across a wide range of flows. It is this second family that
offers a route to evolving strained turbulence with full scale resolution, and
ODT that we develop here.

The one-dimensional turbulence (ODT) model of \citet{Kerstein1999} is, for
these reasons, a natural candidate. ODT represents the evolution of a turbulent
velocity profile along a single notional line of sight, resolving the full range
of length scales in one spatial dimension while modelling turbulent advection as
a stochastic sequence of scale-local mapping (``eddy'') events built on the
triplet map introduced in the linear-eddy model \citep{Kerstein1991}. Its vector
formulation \citep{Kerstein2001} carries all three velocity components and
redistributes energy among them through a kernel mechanism that emulates
pressure scrambling \citep{WunschKerstein2001,Kerstein2001}, reproducing
inter-component transfer and return to isotropy; the model has been generalised
to variable density \citep{AshurstKerstein2005}, to efficient adaptive-mesh and
spatially developing implementations \citep{Lignell2013,StephensLignell2021}, to
cylindrical and spherical geometries \citep{Lignell2018}, and embedded as a
subgrid closure in a three-dimensional large-eddy framework
\citep{SchmidtKerstein2010}. Across free-shear and wall-bounded flows it
reproduces inertial-range scaling and component statistics at a small fraction
of the cost of three-dimensional simulation, and it has been coupled to acoustic
analogies to estimate the far-field sound of low-Mach-number turbulent jets
\citep{SharmaKleinSchmidt2022,MedinaMendez2023}---demonstrating that its
resolved component spectra and fluctuating Reynolds stresses can be supplied to
an external theory. ODT is, in short, an accurate and inexpensive generator of
scale-resolved turbulence statistics, of exactly the kind that a model of
strained turbulence must deliver.

Standard ODT cannot, however, represent distortion by a mean strain. The model
contains no mechanism for the action of an imposed mean velocity gradient on the
fluctuating field: its pressure treatment, embodied in the scrambling kernel,
models only the slow, turbulence-driven return to isotropy and not the rapid,
mean-strain-driven redistribution that any strained flow undergoes. The apparent
remedy---importing the linear RDT amplitude equation, in which the pressure
response is the solenoidal projection $P_{ij}(\boldsymbol{k}) = \delta_{ij} - k_i
k_j/k^2$ acting in three-dimensional wavenumber space---is fundamentally
incompatible with the one-dimensional model, because the projection requires the
full wavevector $\boldsymbol{k}=(k_1,k_2,k_3)$, whereas a single ODT line
resolves only the wavenumber along its own coordinate. Reconciling the
inherently three-dimensional and non-local pressure physics of rapid distortion
with the one-dimensional and local structure of ODT is the central modelling
problem addressed in this paper.

We present a strain-coupled formulation of one-dimensional turbulence that
incorporates rapid distortion by an imposed mean velocity gradient while
preserving the structural foundations of the model. The formulation rests on a
decomposition of the pressure--strain interaction into its rapid and slow parts
\citep{Pope2000}: the slow, turbulence-driven part is retained as the existing
ODT scrambling kernel, while the rapid, mean-strain-driven part is introduced as
a new component-redistribution operator. Rather than evaluating the
three-dimensional projection, that operator is constructed so that the
component-energy evolution of the line reproduces the exact homogeneous-RDT
solution for an imposed mean strain, rendering the formulation provably
RDT-consistent in the homogeneous limit while remaining expressed entirely in
quantities available on the line. The mean-strain production enters as a
continuous forcing of the velocity components, which we show preserves the
measure-preservation, continuity and conservation properties of the triplet-map
mechanism; the kinematic compression of wavenumbers under strain is captured by
a consistent deformation of the ODT domain, connecting the formulation to its
adaptive-mesh implementation. The result is a reduced-order, scale-resolved
model of strained turbulence that retains the rapid distortion and the nonlinear
relaxation together and delivers, as output, the distorted and anisotropic
velocity spectrum. We further address the question of principle that the
construction raises---whether the rapid pressure--strain, exactly a nonlocal
three-dimensional object, can be represented consistently on one line at
all---and show that it collapses to a closed, single-line form under one
explicit and physically grounded assumption. The leading-edge-noise application
that motivates the work is the natural consumer of this output: the distorted
upwash spectrum is precisely the input that Amiet's theory prescribes but does
not compute. We develop the distortion model and its validation here and defer
the coupling to the acoustic stage---together with the surface blocking and
spanwise coherence that a complete leading-edge prediction requires---to a
companion paper.

The remainder of the paper is organised as follows.
Section~\ref{sec:odt} reviews the elements of standard ODT required for the
present development and the rapid/slow structure of the pressure--strain
interaction. Section~\ref{sec:formulation} derives the strain-coupled
formulation, establishes the preservation of the model's structural
invariants under mean-strain forcing, and constructs the RDT-consistent
rapid operator. Section~\ref{sec:validation} verifies the formulation
against analytical homogeneous-RDT solutions for the canonical strains
of plane strain and axisymmetric contraction, examines the emergent
distorted spectrum, and validates the model against the strained-turbulence
experiment of \citet{ChenMeneveauKatz2006} and the direct numerical
simulations of \citet{LeeReynolds1985}. Section~\ref{sec:gate} addresses the
question of principle, establishing the condition under which the rapid
pressure--strain term closes on a single line.
Section~\ref{sec:conclusions} concludes and outlines the route, through the
acoustic coupling and the surface blocking, towards the leading-edge-noise
application.


\section{Background}\label{sec:odt}

This section collects the elements on which the strain-coupled formulation
of \S\,\ref{sec:formulation} is built. Section~\ref{sec:odt-standard}
summarises standard one-dimensional turbulence (ODT) in the notation used
throughout; \S\,\ref{sec:rapid-slow} recalls the rapid/slow decomposition
of the pressure--strain interaction and the linear rapid-distortion
equations that the new formulation is required to reproduce. Nothing in
this section is original; readers familiar with
\citet{Kerstein1999,Kerstein2001} and with second-moment closure
\citep{Pope2000} may proceed to \S\,\ref{sec:formulation} after noting the
notation and the correspondence stated in \S\,\ref{sec:rapid-slow}.

\subsection{Standard one-dimensional turbulence}\label{sec:odt-standard}

ODT evolves a three-component velocity field $u_i(y,t)$, $i=1,2,3$, and any
number of scalar fields $\theta(y,t)$, defined on a one-dimensional domain
parameterised by the spatial coordinate $y$, which is identified with the
coordinate direction $x_2$ \citep{Kerstein2001}. The fields represent
individual realisations: ensemble statistics are accumulated over many
independent simulations rather than carried by the fields themselves. The
fields evolve by two interleaved mechanisms---continuous molecular
transport and a stochastic sequence of instantaneous \emph{eddy events}.

\subsubsection{Molecular transport}

Between events the fields obey ordinary one-dimensional diffusion,
\begin{equation}
  \left(\partial_t-\nu\,\partial_y^2\right)u_i(y,t)=0,
  \qquad
  \left(\partial_t-\kappa\,\partial_y^2\right)\theta(y,t)=0,
  \label{eq:diffusion}
\end{equation}
with kinematic viscosity $\nu$ and scalar diffusivity $\kappa$
(Schmidt number $\mathit{Sc}=\nu/\kappa$). In a constant-property,
constant-density medium this is the only continuous evolution; the
variable-density generalisation is given by \citet{AshurstKerstein2005}.

\subsubsection{Eddy events: the triplet map and kernel}

An eddy event models the effect of a single turbulent overturn. Because a
continuum rotational motion cannot be represented on a line, the event is
implemented as an instantaneous, measure-preserving rearrangement of the
profiles together with an additive velocity change,
\begin{equation}
  u_i(y)\;\longrightarrow\;u_i\!\big(f(y)\big)+c_i\,K(y),
  \qquad
  \theta(y)\;\longrightarrow\;\theta\!\big(f(y)\big).
  \label{eq:event}
\end{equation}
The map $f$ is the \emph{triplet map} \citep{Kerstein1991,Kerstein1999},
defined on an interval $[y_0,y_0+l]$ of size $l$ by
\begin{equation}
  f(y)=y_0+
  \begin{cases}
    3(y-y_0), & y_0\le y\le y_0+\tfrac{1}{3}l,\\[2pt]
    2l-3(y-y_0), & y_0+\tfrac{1}{3}l\le y\le y_0+\tfrac{2}{3}l,\\[2pt]
    3(y-y_0)-2l, & y_0+\tfrac{2}{3}l\le y\le y_0+l,\\[2pt]
    y-y_0, & \text{otherwise,}
  \end{cases}
  \label{eq:triplet}
\end{equation}
written here through its inverse, so that fluid at $f(y)$ is placed at $y$.
The map compresses the interval to one-third of its length, inserts three
copies, and reverses the central copy; it leaves the field outside
$[y_0,y_0+l]$ unchanged. The triplet map satisfies the three structural
requirements that underpin the model: it is \emph{measure preserving} (the
non-local analogue of a divergence-free rearrangement), \emph{continuous}
(it introduces no discontinuities into the fields), and \emph{scale local}
(it changes property gradients by at most a factor of order unity)
\citep{Kerstein2001}. These three properties are exactly the invariants
whose preservation under an imposed mean strain is established in
\S\,\ref{sec:formulation}.

The additive term in \eqref{eq:event} is built from the \emph{kernel}
\begin{equation}
  K(y)\equiv y-f(y),
  \label{eq:kernel}
\end{equation}
which is non-zero only within the eddy interval and satisfies the two
identities used below,
\begin{equation}
  \int K(y)\,\mathrm{d}y = 0,
  \qquad
  \int K^2(y)\,\mathrm{d}y = \tfrac{4}{27}\,l^{3}.
  \label{eq:kernel-identities}
\end{equation}
The first guarantees that the additive change conserves the $y$-integrated
momentum of each velocity component at constant density; the second fixes
the energetics of the redistribution. The amplitudes $c_i$ are determined
event-by-event by the pressure-scrambling rule described next.

\subsubsection{Pressure scrambling and component-energy redistribution}

The additive kernel term in \eqref{eq:event} is the model surrogate for the
pressure-mediated redistribution of energy among velocity components, an
effect absent from the original single-component formulation
\citep{Kerstein1999} and introduced through the kernel by
\citet{WunschKerstein2001,Kerstein2001}. Using measure preservation and
\eqref{eq:kernel-identities}, the change in the kinetic energy of component
$i$ produced by an event is
\begin{equation}
  \Delta E_i = \rho_0\, l^2\, c_i\!\left(u_{i,K}+\tfrac{2}{27}\,l\,c_i\right),
  \qquad
  u_{i,K}\equiv\frac{1}{l^2}\int u_i\!\big(f(y)\big)\,K(y)\,\mathrm{d}y,
  \label{eq:dEi}
\end{equation}
where $\rho_0$ is the (constant) density per unit length, which does not
affect the constant-density applications considered here. The amplitudes
are fixed by requiring that the event redistribute energy among components
while conserving the total, $\sum_i\Delta E_i=0$, and that the rule be
invariant under permutation of the component indices. These requirements
give
\begin{equation}
  \Delta E_i = \alpha\sum_j T_{ij}\,Q_j,
  \qquad
  \mathsf{T}=\frac{1}{2}
  \begin{pmatrix}
    -2 & 1 & 1\\ 1 & -2 & 1\\ 1 & 1 & -2
  \end{pmatrix},
  \qquad
  Q_i\equiv\tfrac{27}{8}\,\rho_0\, l\, u_{i,K}^2,
  \label{eq:redistribution}
\end{equation}
in which $Q_i$ is the \emph{available kinetic energy} of component $i$
within the eddy and $\alpha\in[0,1]$ is the model parameter setting the
fraction of extractable energy transferred towards isotropy
\citep{Kerstein2001}. Combining \eqref{eq:dEi} and
\eqref{eq:redistribution} determines the amplitudes,
\begin{equation}
  c_i = \frac{27}{4l}\left[-u_{i,K}
        +\operatorname{sgn}(u_{i,K})
        \sqrt{u_{i,K}^2+\alpha\sum_j T_{ij}\,u_{j,K}^2}\,\right],
  \label{eq:ci}
\end{equation}
which is real for $0\le\alpha\le1$ and vanishes as $\alpha\to0$. The
transfer matrix $\mathsf{T}$ acts to equalise the component energies; the
construction \eqref{eq:redistribution}--\eqref{eq:ci} is therefore a model
of the \emph{return-to-isotropy} (slow) part of the pressure--strain
interaction, a point made precise in \S\,\ref{sec:rapid-slow}. Standard ODT
contains no counterpart of the rapid, mean-strain-driven part.

\subsubsection{Eddy selection}

The event sequence is a Poisson process whose rate reflects the
instantaneous state of the flow. Each candidate eddy $(y_0,l)$ is assigned
a time scale $\tau(y_0,l;t)$ built from the available energy of the
velocity component aligned with the domain,
\begin{equation}
  \left(\frac{l}{\tau}\right)^{2}
  \sim\; u_{2,K}^2+\alpha\sum_j T_{2j}\,u_{j,K}^2-Z\,\frac{\nu^2}{l^2},
  \label{eq:timescale}
\end{equation}
the final term being a viscous penalty, controlled by the order-unity
parameter $Z$, that suppresses eddies slower than their viscous time. The
corresponding event-rate distribution is
\begin{equation}
  \lambda(y_0,l;t)=\frac{C}{l^2\,\tau(y_0,l;t)},
  \label{eq:rate}
\end{equation}
with rate constant $C$; eddies for which the right-hand side of
\eqref{eq:timescale} is negative are forbidden. Crucially, the
inertial-range cascade and the $E(k)\sim k^{-5/3}$ scaling are
\emph{outcomes} of repeated sampling from \eqref{eq:rate}, not inputs to
it \citep{Kerstein1999}. The three constants $C$, $Z$ and a large-eddy
suppression parameter are the only tunable inputs of the model
\citep{StephensLignell2021}.

\subsubsection{Statistical observables}\label{sec:observables}

ODT yields single- and multi-point moments, probability density functions,
conditional statistics, power spectra and Lagrangian statistics
\citep{Kerstein1999}. For the present work the relevant observable is the
single-component velocity spectrum---in particular the spectrum and
variance of the component normal to the airfoil, which are second moments
of the velocity field and are obtained directly from the realisations. A
distinction emphasised by \citet{Kerstein2001} must, however, be respected:
because the line velocity does not itself advect fluid, an off-diagonal
Reynolds stress such as $\langle u_1' u_2'\rangle$ \emph{cannot} be
interpreted as a momentum flux when formed as a product of the line
velocity components; such advective fluxes are instead evaluated by
monitoring the transport induced by the eddy events. This distinction
constrains how the anisotropy produced by the formulation of
\S\,\ref{sec:formulation} is to be diagnosed and validated
(\S\,\ref{sec:validation}): diagonal component spectra are read from the
fields, whereas shear stresses are read from eddy-induced fluxes.

\subsection{Rapid and slow pressure--strain, and linear rapid
distortion}\label{sec:rapid-slow}

The Reynolds-stress transport equation for a fluctuating field $u_i'$ about
a mean $U_i$ with constant velocity gradient $A_{ij}\equiv\partial
U_i/\partial x_j$ reads, in homogeneous turbulence,
\begin{equation}
  \frac{\mathrm{d}\langle u_i' u_j'\rangle}{\mathrm{d}t}
  = \mathcal{P}_{ij} + \Pi_{ij} - \varepsilon_{ij},
  \label{eq:rstransport}
\end{equation}
with production, pressure--strain and dissipation
\begin{equation}
  \mathcal{P}_{ij}=-\langle u_i' u_k'\rangle A_{jk}
                   -\langle u_j' u_k'\rangle A_{ik},
  \quad
  \Pi_{ij}=\Big\langle \tfrac{p'}{\rho}
            \big(\partial_j u_i'+\partial_i u_j'\big)\Big\rangle,
  \quad
  \varepsilon_{ij}=2\nu\langle \partial_k u_i'\,\partial_k u_j'\rangle .
  \label{eq:Pij}
\end{equation}
Taking the divergence of the fluctuating momentum equation gives the
pressure as the solution of a Poisson equation whose source splits into a
part linear in the mean velocity gradient and a part quadratic in the
fluctuations,
\begin{equation}
  \frac{1}{\rho}\nabla^2 p'
  = \underbrace{-2\,A_{lm}\,\partial_l u_m'}_{\text{rapid}}
    \;-\;
    \underbrace{\partial_l\partial_m\!\left(u_l' u_m'
     -\langle u_l' u_m'\rangle\right)}_{\text{slow}} .
  \label{eq:poisson}
\end{equation}
The pressure--strain correlation inherits this decomposition,
$\Pi_{ij}=\Pi_{ij}^{(r)}+\Pi_{ij}^{(s)}$, into a \emph{rapid} part driven
by the mean strain and a \emph{slow} part driven by the turbulence alone
\citep{Pope2000}. The slow part relaxes the Reynolds stress towards
isotropy; the simplest closure is that of Rotta,
$\Pi_{ij}^{(s)}=-C_1(\varepsilon/k)\big(\langle u_i'u_j'\rangle
-\tfrac{2}{3}k\,\delta_{ij}\big)$, with $k=\tfrac12\langle u_i'u_i'\rangle$.
It is this return-to-isotropy mechanism that the ODT kernel
\eqref{eq:redistribution}--\eqref{eq:ci} reproduces in model form. The
rapid part is linear in $A_{lm}$ and is determined, in homogeneous
turbulence, by the velocity-spectrum tensor
$\Phi_{ij}(\boldsymbol{k})$; the corresponding ``isotropisation of
production'' closure is
$\Pi_{ij}^{(r)}=-C_2\big(\mathcal{P}_{ij}-\tfrac{2}{3}\mathcal{P}\,
\delta_{ij}\big)$, with $\mathcal{P}=\tfrac12\mathcal{P}_{ii}$
\citep{Pope2000}. Standard ODT represents neither $\mathcal{P}_{ij}$ nor
$\Pi_{ij}^{(r)}$.

For a rapidly applied strain---formally, when the total strain accumulates
in a time short compared with the eddy-turnover time---the rapid part can
be computed exactly from linear theory \citep{BatchelorProudman1954,
Hunt1973,HuntCarruthers1990}. Writing the fluctuation as a sum of Fourier
modes that follow the mean flow, each mode of (time-dependent) wavevector
$\boldsymbol{k}(t)$ and amplitude $\hat{u}_i(\boldsymbol{k},t)$ evolves,
in the inviscid rapid limit, according to
\begin{equation}
  \frac{\mathrm{d}k_i}{\mathrm{d}t}=-A_{ji}\,k_j,
  \qquad
  \frac{\mathrm{d}\hat{u}_i}{\mathrm{d}t}
  =\left(\frac{2\,k_i k_m}{k^2}-\delta_{im}\right)A_{mn}\,\hat{u}_n ,
  \label{eq:rdt}
\end{equation}
where $k^2=k_m k_m$. The bracketed operator combines the production
$-A_{im}\hat{u}_m$ with the rapid pressure response, the latter being the
solenoidal projection
$P_{ij}(\boldsymbol{k})=\delta_{ij}-k_ik_j/k^2$ applied to the strained
field. The factor of two in \eqref{eq:rdt} is fixed by the requirement that
the motion remain divergence free: using both relations in
\eqref{eq:rdt},
\begin{equation}
  \frac{\mathrm{d}}{\mathrm{d}t}\big(k_i\hat{u}_i\big)
  = \dot{k}_i\hat{u}_i + k_i\dot{\hat{u}}_i
  = -A_{ji}k_j\hat{u}_i
    + \Big(\tfrac{2k^2 k_m}{k^2}-k_m\Big)A_{mn}\hat{u}_n
  = 0,
  \label{eq:solenoidal}
\end{equation}
so that $k_i\hat{u}_i=0$ is preserved; any other coefficient violates
continuity. Equations~\eqref{eq:rdt} integrate the spectral tensor through
the distortion and thereby determine the rapid pressure--strain
$\Pi_{ij}^{(r)}$ and the component-energy evolution exactly.

Two features of \eqref{eq:rdt} are central to what follows. First, the
pressure response is the operator $P_{ij}(\boldsymbol{k})$, which is
\emph{non-local} (it couples the components through the full wavevector)
and intrinsically \emph{three-dimensional} (it requires all of $k_1,k_2,k_3$).
A single ODT line resolves only the wavenumber along its own coordinate,
so \eqref{eq:rdt} cannot be evaluated on the line as written; reconciling
this incompatibility is the subject of \S\,\ref{sec:formulation}. Second,
the decomposition $\Pi_{ij}=\Pi_{ij}^{(r)}+\Pi_{ij}^{(s)}$ furnishes the
organising principle of the new formulation: the slow part is already
carried by the ODT kernel \eqref{eq:redistribution}--\eqref{eq:ci}, and it
remains only to supply the production $\mathcal{P}_{ij}$ and a rapid
redistribution consistent with \eqref{eq:rdt}, expressed in quantities
available on the line.


\section{Strain-coupled one-dimensional turbulence}\label{sec:formulation}

The linear rapid-distortion equations~\eqref{eq:rdt} contain two distinct
ingredients: an evolution of the Fourier \emph{amplitudes}
$\hat{u}_i(\boldsymbol{k},t)$, comprising production and the solenoidal
pressure projection, and an evolution of the \emph{wavevectors}
$\boldsymbol{k}(t)$ that carries the kinematic straining of scales. The
amplitude equation is the obstruction identified in
\S\,\ref{sec:rapid-slow}: its pressure term requires the full
three-dimensional wavevector and is therefore not directly representable on
a one-dimensional line. Our strategy is to split the distortion physics
according to this structure and to assign each part to the ODT mechanism
that can carry it faithfully. The production, which is local in physical
space, is imposed as a continuous forcing of the line velocity
(\S\,\ref{sec:strain-forcing}); the rapid pressure--strain, which is not,
is supplied as a continuous, energy-conserving redistribution operator
constructed to reproduce the homogeneous rapid-distortion evolution of the
component energies under an explicit spectral closure
(\S\,\ref{sec:rapid-kernel}); and the wavevector kinematics are recovered by
a consistent straining of the ODT domain
(\S\,\ref{sec:domain-strain}). The slow, turbulence-driven
pressure--strain and the inertial cascade remain the discrete eddy events
of standard ODT, unchanged. Section~\ref{sec:summary} assembles the
formulation and identifies the parameter governing the balance between the
rapid and slow mechanisms.

Throughout, $A_{ij}=\partial U_i/\partial x_j$ denotes the imposed mean
velocity gradient, taken uniform across the line at each instant and
varying along the mean trajectory; the medium is incompressible,
$A_{ii}=0$. The line carries the fluctuation field $u_i(y,t)$ about this
mean. We write $R_{ij}\equiv\overline{u_i u_j}$ for the Reynolds-stress
tensor, where the overline denotes the average over the statistically
homogeneous directions (equivalently, the ensemble of independent
realisations), and $k_t\equiv\tfrac12 R_{ii}$ for the turbulent kinetic
energy. In the homogeneous configurations used to establish and test the
formulation (\S\,\ref{sec:validation}) the spatial line average and the
ensemble average coincide.

\subsection{The strain-coupled line equation and its structural
admissibility}\label{sec:strain-forcing}

In rapid-distortion theory the production term of the fluctuation momentum
equation, $-A_{ij}u_j$, is a pointwise linear combination of the local
velocity components and is therefore exactly representable on the line.
Between eddy events we replace the molecular evolution
\eqref{eq:diffusion} by
\begin{equation}
  \partial_t u_i
  = \nu\,\partial_y^2 u_i \;-\; A_{ij}\,u_j \;+\; B_{ij}\,u_j,
  \label{eq:cont-evolution}
\end{equation}
in which $-A_{ij}u_j$ is the production and $B_{ij}u_j$ is the rapid
pressure--strain operator constructed in \S\,\ref{sec:rapid-kernel}; we
defer $B_{ij}$ and treat the production term first. The eddy events
\eqref{eq:event}--\eqref{eq:ci} are retained without modification. The
evolution is thus advanced by operator splitting between a continuous phase
governed by \eqref{eq:cont-evolution} and a discrete sequence of events
sampled from \eqref{eq:rate}.

A formulation is admissible only if the additional forcing does not
violate the structural properties on which ODT rests. We establish this
directly.

\begin{proposition}[Structural admissibility]\label{prop:admissibility}
The strain forcing in \eqref{eq:cont-evolution} preserves the four
structural properties of standard ODT: \emph{(i)} the measure-preservation,
\emph{(ii)} continuity and \emph{(iii)} scale-locality of the triplet map,
and \emph{(iv)} the kernel identities~\eqref{eq:kernel-identities}.
\end{proposition}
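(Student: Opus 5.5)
The argument rests on the operator splitting introduced with \eqref{eq:cont-evolution}. The strain forcing acts only in the continuous phase, and the triplet map acts only at the instants of the eddy events. The claim is therefore that the forcing term $-A_{ij}u_j$ changes only the \emph{state} handed to the next event, never the event operator itself. Concretely, I would write the continuous phase over an interval $[t_n,t_{n+1}]$ between events as a linear evolution $u(\cdot,t_{n+1})=\mathcal{S}(t_{n+1},t_n)\,u(\cdot,t_n)$. Because $A_{ij}$ is uniform in $y$, the forcing is pointwise in $y$ and commutes with $\partial_y^2$. So $\mathcal{S}$ factorises as the heat semigroup $e^{\nu(t_{n+1}-t_n)\partial_y^2}$ composed with the $y$-independent matrix propagator $M=\mathcal{T}\exp\int_{t_n}^{t_{n+1}}(-A+B)\,\mathrm{d}t$, which acts on the component index only. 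The proof then checks each property against this structure.

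For (i) and (ii), the map $f$ in \eqref{eq:triplet} depends only on $(y_0,l)$ and acts on the coordinate $y$. It never involves the component index $i$ or time, so it is the same measure-preserving, continuous, piecewise-linear bijection whatever field it is applied to. The key step is to show that composition is well defined and order-independent in the relevant sense. Since $M$ acts on components and $f$ acts on the argument, $(Mu)(f(y))=M\,u(f(y))$; that is, a constant matrix commutes with the triplet rearrangement. Measure preservation, $\int g(f(y))\,\mathrm{d}y=\int g(y)\,\mathrm{d}y$, then holds for every component of $Mu$. Continuity also follows: the heat semigroup maps continuous fields to continuous (indeed smooth) fields, $M$ is a finite linear combination of components, and $f$ preserves continuity as in standard ODT. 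So no discontinuity is introduced at any stage.

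For (iii), scale locality concerns how gradients change under an event: $\partial_y[u_i(f(y))]=\pm3\,u_i'(f(y))$ inside the eddy. Applying $M$ gives $\partial_y(Mu)=M\partial_y u$, so the event still multiplies gradients by at most the factor $3$, independent of $A$. The strain alters gradient magnitudes only through the bounded factor $\|M\|\le\exp\int\|A-B\|\,\mathrm{d}t$, applied uniformly at all scales. It therefore introduces no preferential action at any length scale and no new length scale; the eddy-size distribution remains set by \eqref{eq:rate}.

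For (iv), the identities \eqref{eq:kernel-identities} are properties of $K=y-f(y)$ alone, which is unchanged, so they hold verbatim. Consequently \eqref{eq:dEi}--\eqref{eq:ci} still apply at each event, with $u_{i,K}$ evaluated on the forced field.

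I expect the main obstacle to be stating (iii) precisely. Scale locality is qualitative, and with non-uniform $A$ along the trajectory one must ensure that the continuous forcing does not amplify some scales relative to others. My plan is to rely on the $y$-independence of $A_{ij}$, and to state the result as exact commutation of $M$ with both $f$ and $\partial_y$, rather than attempting a quantitative bound.
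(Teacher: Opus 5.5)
Your proposal is correct and follows the paper's own argument. The operator splitting leaves $f$ and $K$ untouched, so (i), (iii) and (iv) hold verbatim. The continuous phase is a $y$-independent matrix propagator composed with smoothing diffusion, so continuity (ii) survives. Your time-ordered exponential and the commutation of $M$ with $f$ and $\partial_y$ are a slightly more careful version of the paper's $\exp\int\mathcal{A}\,\mathrm{d}t'$, not a different route.

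There are two minor slips that do not affect the argument. First, the continuum triplet map $f$ is three-to-one on $[y_0,y_0+l]$, not a bijection; the identity $\int g(f(y))\,\mathrm{d}y=\int g(y)\,\mathrm{d}y$ that you actually use is still correct. Second, $B_{ij}$ is rebuilt from the running $R_{ij}$, so the continuous phase is not strictly linear in $u$. However, $\mathcal{A}(t)$ remains $y$-independent along the trajectory, which is all your factorisation needs.
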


\begin{proof}
Properties (i)--(iv) are properties of the discrete event, that is, of the
map $f$~\eqref{eq:triplet} and the kernel $K$~\eqref{eq:kernel}. Because the
advancement is split into a continuous phase \eqref{eq:cont-evolution} and
the unaltered discrete event, the map and kernel are applied exactly as in
standard ODT; properties (i), (iii) and (iv), which are geometric
statements about $f$ and $K$ on a fixed domain, therefore hold verbatim.
It remains to verify that the continuous phase does not destroy the
continuity of the fields, which underlies (ii). For a mean gradient uniform
along the line, the linear, $y$-independent operator
$\mathcal{A}_{ij}\equiv -A_{ij}+B_{ij}$ acts pointwise, and the
solution of the advective part of \eqref{eq:cont-evolution} over a
sub-step $[t_0,t]$ is
\begin{equation}
  u_i(y,t)=\big[\exp\!\textstyle\int_{t_0}^{t}\mathcal{A}\,\mathrm{d}t'\big]_{ij}\,
           u_j(y,t_0),
  \label{eq:matrixexp}
\end{equation}
a matrix that is independent of $y$. It maps continuous (respectively,
$C^\infty$) profiles to continuous ($C^\infty$) profiles and introduces no
spatial discontinuity, while the molecular term $\nu\partial_y^2 u_i$ is
smoothing. Continuity is preserved, establishing (ii). The forcing
displaces no fluid along $y$, so it leaves the Lebesgue measure of the
domain unchanged; the deliberate straining of the domain is treated
separately in \S\,\ref{sec:domain-strain}.
\end{proof}

The energetic role of the forcing follows from \eqref{eq:cont-evolution}.
At fixed domain length, the production term evolves the Reynolds stress as
\begin{equation}
  \left.\frac{\mathrm{d}R_{ij}}{\mathrm{d}t}\right|_{\text{prod}}
  = \overline{(\dot u_i u_j + u_i \dot u_j)}
  = -A_{ik}R_{kj}-A_{jk}R_{ki}
  \;\equiv\; \mathcal{P}_{ij},
  \label{eq:production-rate}
\end{equation}
which is \emph{exactly} the Reynolds-stress production tensor of
\eqref{eq:Pij}; the corresponding kinetic-energy production is
$\mathcal{P}\equiv\tfrac12\mathcal{P}_{ii}=-A_{ij}R_{ij}$. The production
term thus carries no model assumption. With diffusion supplying
dissipation through \eqref{eq:diffusion}, the kinetic-energy budget of the
formulation separates cleanly into an exact production by the mean strain,
a redistribution that conserves energy (the kernels), and a dissipation
(molecular diffusion). This is the structure of the exact Reynolds-stress
equation \eqref{eq:rstransport}, and it is the reason the forcing is
admissible: it adds the one term---production---that standard ODT lacks,
without disturbing the terms it already represents.

\subsection{The RDT-consistent rapid redistribution
operator}\label{sec:rapid-kernel}

\subsubsection{Why production alone is insufficient}

Were the continuous evolution to consist of production and diffusion only
($B_{ij}=0$), the line would reproduce $\mathrm{d}R_{ij}/\mathrm{d}t=
\mathcal{P}_{ij}$, whereas the exact inviscid rapid-distortion balance is
$\mathrm{d}R_{ij}/\mathrm{d}t=\mathcal{P}_{ij}+\Pi_{ij}^{(r)}$. The
omission is not small. For turbulence that is isotropic at the onset of
straining, the classical result derived below gives
$\Pi_{ij}^{(r)}=-\tfrac35\mathcal{P}_{ij}$, so that the true initial
distortion rate is $\tfrac25\mathcal{P}_{ij}$: production acting alone
overstates the rate at which anisotropy develops by a factor of
$\tfrac52$. For sustained strong strain, neglect of the pressure response
also permits a component energy to grow without the bound that
incompressibility imposes, violating realisability. Representing
$\Pi_{ij}^{(r)}$ is therefore essential, not optional.

\subsubsection{The exact rapid pressure--strain and the closure problem}

The exact rapid pressure--strain follows from the amplitude
equation~\eqref{eq:rdt}. Writing the operator there as
$\mathrm{d}\hat{u}_i/\mathrm{d}t = M_{in}\hat{u}_n$ with
$M_{in}=-A_{in}+2(k_ik_m/k^2)A_{mn}$, and noting that the incompressible
wavevector dynamics $\dot{k}_i=-A_{ji}k_j$ preserve the measure
$\mathrm{d}\boldsymbol{k}$ (its phase-space divergence is $-A_{ii}=0$), the
spectral tensor $\Phi_{ij}=\langle \hat{u}_i^{*}\hat{u}_j\rangle$ yields
\begin{equation}
  \frac{\mathrm{d}R_{ij}}{\mathrm{d}t}
  = \int\!\big(M_{in}\Phi_{nj}+M_{jn}\Phi_{in}\big)\,
    \mathrm{d}\boldsymbol{k}
  = \mathcal{P}_{ij} + \Pi_{ij}^{(r)},
  \label{eq:dRdt-rdt}
\end{equation}
in which the production~\eqref{eq:production-rate} has been separated from
the pressure contribution,
\begin{equation}
  \Pi_{ij}^{(r)} = 2A_{mn}\big(\mathcal{R}_{ijmn}+\mathcal{R}_{jimn}\big),
  \qquad
  \mathcal{R}_{ijmn} \equiv
  \int \frac{k_i k_m}{k^2}\,\Phi_{nj}(\boldsymbol{k})\,
  \mathrm{d}\boldsymbol{k}.
  \label{eq:exact-rps}
\end{equation}
The tensor $\Pi_{ij}^{(r)}$ is traceless: using the solenoidality
$k_n\Phi_{nj}=0$ (a consequence of $k_i\hat{u}_i=0$),
\begin{equation}
  \Pi_{ii}^{(r)} = 4A_{mn}\!\int\frac{k_i k_m}{k^2}\Phi_{ni}\,
  \mathrm{d}\boldsymbol{k}
  = 4A_{mn}\!\int\frac{k_m}{k^2}\big(k_i\Phi_{ni}\big)\,
  \mathrm{d}\boldsymbol{k}=0,
  \label{eq:traceless}
\end{equation}
so the rapid pressure--strain redistributes energy among components
without changing the total, in keeping with the energy-conserving
character of the ODT kernel mechanism.

Equation~\eqref{eq:exact-rps} exposes the central difficulty. The rapid
pressure--strain is a functional of the spectral tensor $\Phi_{nj}
(\boldsymbol{k})$ through the orientation-weighted integral
$\mathcal{R}_{ijmn}$; it is \emph{not} expressible in terms of the
single-point Reynolds stress $R_{ij}$ alone. This is the well-known closure
problem of the rapid pressure--strain, common to every single-point model
\citep{Crow1968,LaunderReeceRodi1975,Pope2000}; it is intrinsic to the
physics and not a deficiency peculiar to the present approach. A
one-dimensional line carries $R_{ij}$ and the one-dimensional spectrum
$\Phi_{ij}(k_2)$ along its own coordinate, but it does not carry the
distribution of energy over wavevector \emph{orientation} that
$\mathcal{R}_{ijmn}$ requires. A spectral-tensor closure is therefore
unavoidable, exactly as in second-moment closure; we make it explicit.

\subsubsection{Baseline isotropic closure}

The minimal closure assumes the spectral tensor to be instantaneously
isotropic,
$\Phi_{nj}(\boldsymbol{k})=\frac{E(k)}{4\pi k^2}\big(\delta_{nj}
-k_nk_j/k^2\big)$ with $\int E(k)\,\mathrm{d}k=k_t$. Writing
$e_i\equiv k_i/k$ and using the angular averages over the unit sphere
$\overline{e_i e_m}=\tfrac13\delta_{im}$ and
$\overline{e_i e_m e_n e_j}=\tfrac1{15}(\delta_{im}\delta_{nj}
+\delta_{in}\delta_{mj}+\delta_{ij}\delta_{mn})$,
\begin{equation}
  \mathcal{R}_{ijmn}^{\,\mathrm{iso}}
  = k_t\Big[\tfrac{4}{15}\delta_{im}\delta_{nj}
    -\tfrac1{15}\big(\delta_{in}\delta_{mj}+\delta_{ij}\delta_{mn}\big)\Big].
  \label{eq:iso-closure}
\end{equation}
Substituting \eqref{eq:iso-closure} into \eqref{eq:exact-rps} and using
$A_{nn}=0$ gives, with $S_{ij}\equiv\tfrac12(A_{ij}+A_{ji})$,
\begin{equation}
  \Pi_{ij}^{(r),\mathrm{iso}}
  = 2k_t\Big[\tfrac{4}{15}A_{ij}-\tfrac1{15}A_{ji}\Big]
   +2k_t\Big[\tfrac{4}{15}A_{ji}-\tfrac1{15}A_{ij}\Big]
  = \tfrac{4}{5}\,k_t\,S_{ij}.
  \label{eq:Pir-iso}
\end{equation}
For isotropic $R_{ij}=\tfrac23 k_t\delta_{ij}$ the production
\eqref{eq:production-rate} is $\mathcal{P}_{ij}=-\tfrac43 k_t S_{ij}$, so
that \eqref{eq:Pir-iso} is equivalent to
\begin{equation}
  \Pi_{ij}^{(r),\mathrm{iso}} = -\tfrac{3}{5}\,\mathcal{P}_{ij},
  \label{eq:crow}
\end{equation}
the exact response of initially isotropic turbulence to a weak rapid strain
\citep{Crow1968}. Equation~\eqref{eq:crow} is the content of the
isotropisation-of-production model with the constant $C_2=\tfrac35$
\citep{Pope2000}; here it is obtained, with no adjustable constant, as the
isotropic limit of the rapid-distortion integral~\eqref{eq:exact-rps}.

When the turbulence is anisotropic, \eqref{eq:iso-closure} no longer holds
and $\mathcal{R}_{ijmn}$ must be modelled as a tensor function of $R_{ij}$;
the general representation linear in the anisotropy is that of
\citet{LaunderReeceRodi1975}, which introduces calibrated constants and to
which the present construction reduces in form. We adopt
\eqref{eq:iso-closure} as the baseline and treat the anisotropic
correction as a modelling option, returning to its assessment in
\S\,\ref{sec:validation}.

\subsubsection{Realisation on the line}

Given the closed rapid pressure--strain $\Pi_{ij}^{(r)}$---symmetric and
traceless by \eqref{eq:traceless}---we require a pointwise operator
$B_{ij}$ on the line whose action reproduces it. Demanding that the
operator evolve the Reynolds stress by exactly $\Pi_{ij}^{(r)}$,
\begin{equation}
  B_{ik}R_{kj}+B_{jk}R_{ki}
  = \big(\mathsf{B}\mathsf{R}+\mathsf{R}\mathsf{B}\big)_{ij}
  = \Pi_{ij}^{(r)},
  \label{eq:lyapunov}
\end{equation}
where the second equality holds for symmetric $\mathsf{B}$. Equation
\eqref{eq:lyapunov} is a continuous Lyapunov equation; for a
positive-definite Reynolds stress $\mathsf{R}\succ0$ it possesses a unique
symmetric solution $\mathsf{B}$. The solution conserves kinetic energy
automatically: contracting \eqref{eq:lyapunov} gives
$2B_{ij}R_{ij}=\Pi_{ii}^{(r)}=0$, so that the rate of working of the
operator on the line, $\overline{B_{ij}u_iu_j}=B_{ij}R_{ij}$, vanishes. In
the isotropic case $\mathsf{R}=\tfrac23 k_t\mathsf{I}$, \eqref{eq:lyapunov}
together with \eqref{eq:Pir-iso} reduces to $\tfrac43 k_t\mathsf{B}
=\tfrac45 k_t\mathsf{S}$, that is, the closed form
\begin{equation}
  B_{ij} = \tfrac{3}{5}\,S_{ij},
  \qquad
  \mathcal{A}_{ij} = -A_{ij} + \tfrac{3}{5}\,S_{ij}.
  \label{eq:iso-B}
\end{equation}
The operator $B_{ij}$ is evaluated from the running Reynolds stress; it is
applied continuously, paced by the mean strain, exactly as is the
production. This is deliberate: the rapid pressure--strain is, by
construction, the part of the redistribution driven by the mean velocity
gradient and therefore acts at the strain rate, in contrast to the slow,
turbulence-driven redistribution that the eddy kernel
\eqref{eq:redistribution}--\eqref{eq:ci} supplies at the eddy rate. The two
mechanisms are thereby cleanly separated: strain-driven effects (production
and rapid pressure--strain) are continuous and enter through the operator
$\mathcal{A}_{ij}$, while turbulence-driven effects (the inertial cascade
and the slow return to isotropy) remain the discrete events. The uniform,
scale-independent action of $B_{ij}$ is the faithful single-point reduction
of the rapid pressure response, which in \eqref{eq:rdt} acts on every
Fourier mode through the projection $P_{ij}(\boldsymbol{k})$; the
distribution of energy across scales continues to be governed by diffusion,
the triplet-map cascade and the domain straining of
\S\,\ref{sec:domain-strain}.

We can now state precisely the sense in which the formulation is
rapid-distortion consistent.

\begin{proposition}[Rapid-distortion consistency]\label{prop:rdt}
Let $B_{ij}$ be defined by the Lyapunov equation~\eqref{eq:lyapunov} with
$\Pi_{ij}^{(r)}$ evaluated from a prescribed spectral-tensor closure. Then
the continuous evolution~\eqref{eq:cont-evolution} reproduces the
homogeneous rapid-distortion evolution of the Reynolds
stress~\eqref{eq:dRdt-rdt} exactly for that closure. With the isotropic
closure~\eqref{eq:iso-closure}, the reproduction is exact at the onset of
distortion of initially isotropic turbulence and contains no adjustable
constant, recovering the classical result~\eqref{eq:crow}.
\end{proposition}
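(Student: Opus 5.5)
The argument is a direct moment computation on the continuous evolution \eqref{eq:cont-evolution}, followed by substitution of the closed-form operator \eqref{eq:iso-B}. First, I will drop the molecular term: in the inviscid rapid limit against which \eqref{eq:dRdt-rdt} is stated, diffusion contributes only the dissipation $\varepsilon_{ij}$ and is not part of the comparison. I will then multiply \eqref{eq:cont-evolution} by $u_j$, symmetrise in $(i,j)$, and average over the homogeneous directions. Because $A_{ij}$ and $B_{ij}$ are uniform along the line, they commute with the overline, which gives
\begin{equation*}
  \frac{\mathrm{d}R_{ij}}{\mathrm{d}t}
  = -A_{ik}R_{kj}-A_{jk}R_{ki} + B_{ik}R_{kj}+B_{jk}R_{ki}.
\end{equation*}
The first pair of terms is $\mathcal{P}_{ij}$ by \eqref{eq:production-rate}. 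The second pair equals $\Pi^{(r)}_{ij}$ by the defining Lyapunov equation \eqref{eq:lyapunov}.

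The step that needs care is that $B_{ij}$ must be well defined and must yield exactly the closed $\Pi^{(r)}_{ij}$. I will invoke the standard fact that for $\mathsf{R}\succ0$ the map $\mathsf{B}\mapsto\mathsf{BR}+\mathsf{RB}$ is invertible on symmetric matrices. Diagonalising $\mathsf{R}$ with eigenvalues $r_a>0$ gives $B_{ab}=\Pi_{ab}/(r_a+r_b)$, so the solution is unique and symmetric, and the second equality in \eqref{eq:lyapunov} is legitimate. Substituting this into the moment equation reproduces \eqref{eq:dRdt-rdt} with $\Pi^{(r)}_{ij}$ evaluated from the chosen closure of $\mathcal{R}_{ijmn}$. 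This holds at every instant along the trajectory, because $B_{ij}$ is recomputed from the running $R_{ij}$, and therefore it holds for the whole evolution under that closure. I expect the main subtlety to be one of interpretation rather than calculation. The exactness is relative to the closure, since the line cannot know the true orientation distribution in \eqref{eq:exact-rps}. I will state explicitly that no other approximation enters.

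For the isotropic statement, take $\mathsf{R}=\tfrac23k_t\mathsf{I}$ at onset. For initially isotropic turbulence, the closure \eqref{eq:iso-closure} is the exact spectral tensor at that instant, not an approximation. Hence \eqref{eq:Pir-iso} gives the exact $\Pi^{(r)}_{ij}=\tfrac45k_tS_{ij}$. The Lyapunov solution is then $B_{ij}=\tfrac35S_{ij}$ as in \eqref{eq:iso-B}, which involves no free constant. Using $\mathcal{P}_{ij}=-\tfrac43k_tS_{ij}$, the line's initial rate is $\mathcal{P}_{ij}+\Pi^{(r)}_{ij}=\tfrac25\mathcal{P}_{ij}$, which recovers \eqref{eq:crow}. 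This step has to be phrased as exactness of the initial rate only. Once strain accumulates, the true $\Phi_{ij}$ becomes anisotropic and departs from \eqref{eq:iso-closure}, so beyond onset the reproduction is exact only with respect to the closure, as in the first part of the proposition.
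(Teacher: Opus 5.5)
Your proposal is correct and follows the paper's proof. The production term contributes $\mathcal{P}_{ij}$ by \eqref{eq:production-rate}, the Lyapunov-defined $B_{ij}$ contributes $\Pi^{(r)}_{ij}$ by construction \eqref{eq:lyapunov}, and the isotropic onset statement follows from \eqref{eq:iso-closure}--\eqref{eq:Pir-iso}. Your extra remarks make explicit what the paper's short proof leaves implicit: that the Lyapunov equation has a unique symmetric solution for $\mathsf{R}\succ0$, and that the viscous term is set aside in the inviscid comparison.
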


\begin{proof}
By \eqref{eq:production-rate} the production term of
\eqref{eq:cont-evolution} contributes $\mathcal{P}_{ij}$ to
$\mathrm{d}R_{ij}/\mathrm{d}t$, and by construction~\eqref{eq:lyapunov} the
operator $B_{ij}$ contributes $\Pi_{ij}^{(r)}$. Their sum is the
right-hand side of \eqref{eq:dRdt-rdt}. For initially isotropic turbulence
$\Pi_{ij}^{(r)}$ equals the exact value~\eqref{eq:crow} by
\eqref{eq:iso-closure}--\eqref{eq:Pir-iso}.
\end{proof}

The scope of Proposition~\ref{prop:rdt} must be stated as carefully as the
result. The consistency is established at the level of the
\emph{component energies} $R_{ij}$ and is \emph{conditional on the spectral
closure}: the isotropic closure is exact only at the onset of distortion,
and its accuracy for strongly anisotropic states is that of the
isotropisation-of-production approximation it reproduces. The
\emph{spectral} distribution of the distorted energy is not enforced by
\eqref{eq:cont-evolution}; it emerges from the interaction of the strain
operator with diffusion, the cascade and the domain straining, and is
therefore a genuine prediction of the model, to be tested rather than
assumed (\S\,\ref{sec:validation}). We make no claim that the formulation
reproduces the full spectral evolution of rapid-distortion theory, which no
single-point model can; the claim is the weaker, provable one of
Proposition~\ref{prop:rdt}.

\subsection{Wavevector kinematics by domain straining}\label{sec:domain-strain}

The second ingredient of \eqref{eq:rdt}, the wavevector evolution
$\mathrm{d}k_i/\mathrm{d}t=-A_{ji}k_j$, carries the kinematic compression
and stretching of scales by the mean strain and is independent of the
amplitude evolution; in the method-of-characteristics reading of
rapid-distortion theory, the wavevectors are the characteristics along
which the amplitudes evolve. On the line, aligned with the coordinate
$x_2$, the relevant wavenumber is $k_2$. We restrict attention to the case
in which the line is aligned with a principal axis of the mean strain, so
that $A_{12}=A_{32}=0$ and a material element of the line remains a
coordinate line; this holds, in particular, for the irrotational mean flow
of a stagnation region, for which $A_{ij}$ is symmetric and its
eigenvectors define such axes. Then
\begin{equation}
  \frac{\mathrm{d}k_2}{\mathrm{d}t}=-A_{22}\,k_2 .
  \label{eq:wavenumber-line}
\end{equation}
A material element of the line of length $\ell$ stretches with the mean
flow as $\dot{\ell}/\ell=A_{22}$; since wavenumbers scale as $k_2\propto
1/\ell$, a uniform dilatation of the line reproduces
\eqref{eq:wavenumber-line} exactly,
\begin{equation}
  \frac{1}{L}\frac{\mathrm{d}L}{\mathrm{d}t}=A_{22},
  \qquad\Longrightarrow\qquad
  k_2(t)=k_2(t_0)\,\exp\!\Big(\!-\!\int_{t_0}^{t}A_{22}\,\mathrm{d}t'\Big),
  \label{eq:dilatation}
\end{equation}
where $L$ is the domain length. The dilatation~\eqref{eq:dilatation} is
implemented directly within the Lagrangian adaptive-mesh advancement of
ODT \citep{Lignell2013}, in which cell sizes already evolve, by imposing
the additional mean-flow stretching on the mesh.

This step is not redundant with the amplitude operator of
\S\S\,\ref{sec:strain-forcing}--\ref{sec:rapid-kernel}. Because the
Reynolds stress $R_{ij}$ is an intensive (per-unit-length) second moment,
it is invariant under the uniform dilatation~\eqref{eq:dilatation}: the
dilatation relabels the scales that carry the energy without changing the
component energies, while the operator $\mathcal{A}_{ij}$ changes the
component energies without relabelling scales. The two operations are the
physical-space counterparts of the characteristic motion $\boldsymbol{k}(t)$
and the amplitude evolution $\hat{u}(\boldsymbol{k}(t),t)$ in
\eqref{eq:rdt}, and together they account for the distinct contributions
of mean straining without double counting.

Two limitations are inherent to representing a three-dimensional kinematics
on one axis. The transverse wavenumbers $k_1$ and $k_3$ also strain under
$\mathrm{d}k_i/\mathrm{d}t=-A_{ji}k_j$, but are not represented on the line;
their effect on the redistribution is subsumed in the spectral closure of
\S\,\ref{sec:rapid-kernel}, which already supplies the orientation
information the line lacks. A mean flow with a rotational part, or a line
not aligned with a principal axis, tilts the material line out of the
coordinate direction and couples it to the unresolved directions; this lies
outside the present formulation and is admissible only to the extent that
the stagnation-region mean flow is irrotational.

\subsection{Summary of the formulation}\label{sec:summary}

The strain-coupled model advances the line field $u_i(y,t)$ over a
sub-step by composing three operations:
\begin{enumerate}
\item[(a)] \emph{continuous strain--diffusion phase} on a dilating domain:
  integrate $\partial_t u_i=\nu\partial_y^2 u_i+\mathcal{A}_{ij}u_j$ with
  $\mathcal{A}_{ij}=-A_{ij}+B_{ij}$, the production term exact and the
  rapid term $B_{ij}$ given by \eqref{eq:lyapunov} from the current
  Reynolds stress and the spectral closure (the baseline being
  $B_{ij}=\tfrac35 S_{ij}$, \eqref{eq:iso-B}); simultaneously dilate the
  domain by \eqref{eq:dilatation};
\item[(b)] \emph{discrete eddy events} sampled from the rate
  distribution~\eqref{eq:rate}: the triplet map~\eqref{eq:triplet} and the
  slow kernel~\eqref{eq:redistribution}--\eqref{eq:ci}, unchanged from
  standard ODT.
\end{enumerate}
Step (a) carries the mean-strain physics---production, rapid
pressure--strain and the kinematic straining of scales---and step (b)
carries the turbulence-driven physics---the inertial cascade and the slow
return to isotropy. Standard ODT is recovered exactly when $A_{ij}=0$.

The relative importance of the two steps is governed by the ratio of the
mean-strain rate $S\equiv(2S_{ij}S_{ij})^{1/2}$ to the turbulence
relaxation rate $\varepsilon/k_t$,
\begin{equation}
  \chi \equiv \frac{S\,k_t}{\varepsilon}.
  \label{eq:chi}
\end{equation}
For $\chi\gg1$ the distortion accumulates faster than the turbulence can
respond, the eddy events are negligible over the distortion time, and the
formulation reduces to the rapid-distortion limit of step (a), exact under
the closure by Proposition~\ref{prop:rdt}. For $\chi\ll1$ the turbulence
relaxes between increments of strain, the slow kernel dominates the
redistribution, and the model returns to a near-equilibrium anisotropy. The
stagnation region of a leading edge is the intermediate regime, $\chi=
O(1)$, in which neither limit is uniformly valid and both mechanisms
contribute; it is precisely there that a formulation carrying the two
together, rather than either alone, is required. The operating point of the
simulations reported in \S\,\ref{sec:val-spectrum} sits in this regime, at
$\chi=Sk_t/\varepsilon\approx0.8$ (interquartile range $0.7$--$0.9$ over the
strained trajectory), confirming that the emergent-spectrum results
are obtained where the distinction between the rapid and near-equilibrium limits
is physically meaningful. The quantitative behaviour
across this range, and the fidelity of the emergent distorted spectrum, are
assessed against rapid-distortion solutions and scale-resolving data in
\S\,\ref{sec:validation}.


\section{Verification and validation}\label{sec:validation}

We separate two questions. \emph{Verification} asks whether the
formulation reproduces a known exact solution in the regime where one
exists; the only such regime is the rapid-distortion limit, against which
\S\,\ref{sec:verification} tests the formulation analytically. \emph{Validation}
asks whether the formulation reproduces the behaviour of the true flow
outside that limit; \S\,\ref{sec:val-spectrum} examines the emergent
distorted spectrum, and \S\,\ref{sec:cmk}--\S\,\ref{sec:lr-validation}
compare against external reference data---the strained-turbulence experiment
of \citet{ChenMeneveauKatz2006} and the direct numerical simulations of
\citet{LeeReynolds1985}---together with the metrics and the diagnostic value
of each possible failure. The analytical benchmarks of
\S\,\ref{sec:verification} are exact and are stated in full; the computed
comparisons are reported once the corresponding simulations are completed.

\subsection{Verification against rapid-distortion theory}\label{sec:verification}

\subsubsection{Canonical strains}

Two irrotational homogeneous strains both admit exact rapid-distortion
solutions and isolate the physics of interest. \emph{Plane strain},
\begin{equation}
  \mathsf{A}=\operatorname{diag}(a,-a,0),\qquad a>0,
  \label{eq:plane-strain}
\end{equation}
is the strain of a two-dimensional stagnation region: with $x_1$ tangential
to the surface, $x_2$ normal to it and $x_3$ spanwise, the mean flow
accelerates away from the stagnation line along $x_1$ and decelerates
towards the surface along $x_2$. The component compressed by the strain is
the wall-normal (upwash) component $u_2$, whose amplification is the
pre-impact distortion relevant to leading-edge noise. \emph{Axisymmetric
strain},
\begin{equation}
  \mathsf{A}=\operatorname{diag}(-\tfrac12\gamma,-\tfrac12\gamma,\gamma),
  \qquad \gamma>0,
  \label{eq:axi-strain}
\end{equation}
is the strain of a contraction or a three-dimensional stagnation flow and
provides an independent test in which two components are amplified and one
suppressed. For both, $\mathsf{A}$ is symmetric, so $\mathsf{A}=\mathsf{S}$
and the principal-axis alignment assumed in \S\,\ref{sec:domain-strain}
holds; the ODT line is taken along $x_2$.

\subsubsection{Onset behaviour: an exact, constant-free check}\label{sec:onset}

For turbulence isotropic at the onset of straining, $R_{ij}=\tfrac23
k_t\delta_{ij}$, the continuous evolution~\eqref{eq:cont-evolution} gives,
by Proposition~\ref{prop:rdt} and equations
\eqref{eq:production-rate} and \eqref{eq:crow},
\begin{equation}
  \left.\frac{\mathrm{d}R_{ij}}{\mathrm{d}t}\right|_{0}
  = \mathcal{P}_{ij}+\Pi_{ij}^{(r),\mathrm{iso}}
  = \tfrac{2}{5}\,\mathcal{P}_{ij}
  = -\tfrac{8}{15}\,k_t\,S_{ij},
  \label{eq:onset-rate}
\end{equation}
which is the exact rapid-distortion rate for initially isotropic turbulence
and contains no adjustable constant. Equation~\eqref{eq:onset-rate} is the
sharpest verification available: it is a closed-form prediction that the
model must satisfy identically, not approximately. Evaluating it for the
two canonical strains gives the component rates in
table~\ref{tab:onset}. The wall-normal (compressed) component is amplified
in both cases; for plane strain the model predicts the upwash growth rate
\begin{equation}
  \left.\frac{\mathrm{d}R_{22}}{\mathrm{d}t}\right|_{0}
  = +\tfrac{8}{15}\,k_t\,a ,
  \label{eq:upwash-onset}
\end{equation}
to be compared with the value $\tfrac43 k_t a$ that production alone
would give: neglect of the rapid pressure--strain
overstates the upwash amplification rate by the factor $\tfrac52$
anticipated in \S\,\ref{sec:rapid-kernel}. A model that passes
\eqref{eq:onset-rate} for both strains has its production and rapid
pressure--strain terms verified jointly; failure would indicate an error
in the operator $\mathcal{A}_{ij}$ or its implementation, isolated from any
closure or cascade effect.

\begin{table}
  \begin{center}
  \def~{\hphantom{0}}
  \begin{tabular}{lccc}
    \hline
    & $\mathrm{d}R_{11}/\mathrm{d}t|_0$
    & $\mathrm{d}R_{22}/\mathrm{d}t|_0$
    & $\mathrm{d}R_{33}/\mathrm{d}t|_0$ \\[3pt]
    \hline
    Plane strain \eqref{eq:plane-strain}
      & $-\tfrac{8}{15}k_t a$ & $+\tfrac{8}{15}k_t a$ & $0$ \\[3pt]
    Axisymmetric \eqref{eq:axi-strain}
      & $+\tfrac{4}{15}k_t\gamma$ & $+\tfrac{4}{15}k_t\gamma$
      & $-\tfrac{8}{15}k_t\gamma$ \\[3pt]
    \hline
  \end{tabular}
  \caption{Exact onset rates of the component energies from
  \eqref{eq:onset-rate} for initially isotropic turbulence. The component
  compressed by the strain ($u_2$ for plane strain; $u_1,u_2$ for
  axisymmetric contraction) is amplified; the stretched component is
  suppressed.}
  \label{tab:onset}
  \end{center}
\end{table}

\subsubsection{Finite total strain}\label{sec:finite-strain}

Beyond onset the turbulence becomes anisotropic and the rapid
pressure--strain~\eqref{eq:exact-rps} ceases to be a function of the
Reynolds stress alone, so the verification becomes a test of the spectral
closure rather than of the construction. We integrate the model moment
equation $\mathrm{d}R_{ij}/\mathrm{d}t=\mathcal{P}_{ij}+\Pi_{ij}^{(r)}$
along each canonical strain with two closures for $\Pi_{ij}^{(r)}$: the
baseline isotropic closure~\eqref{eq:iso-closure}, equivalent to the
isotropisation-of-production model with $C_2=3/5$ (hereafter IP), and the
quasi-isotropic model of \citet{LaunderReeceRodi1975} (hereafter LRR),
\begin{equation}
  \Pi_{ij}^{(r)} = C_2\,k_t S_{ij}
  + C_3\,k_t\big(b_{ik}S_{jk}+b_{jk}S_{ik}-\tfrac{2}{3}b_{mn}S_{mn}\delta_{ij}\big)
  + C_4\,k_t\big(b_{ik}W_{jk}+b_{jk}W_{ik}\big),
  \label{eq:lrr}
\end{equation}
with $b_{ij}=R_{ij}/2k_t-\tfrac13\delta_{ij}$, mean rotation
$W_{ij}=\tfrac12(A_{ij}-A_{ji})$, and constants $C_2=4/5$, $C_3=7/4$,
$C_4=131/100$. The coefficient $C_2=4/5$ is fixed by the requirement that
\eqref{eq:lrr} reduce at isotropy ($b_{ij}=0$) to the exact value
$\tfrac45 k_t S_{ij}$ of \eqref{eq:Pir-iso}, so that LRR shares the
constant-free onset slope~\eqref{eq:onset-rate}; the strains considered
here are irrotational ($W_{ij}=0$). The benchmark is the exact
rapid-distortion solution, obtained not from a tabulated closed form but by
integrating the spectral equations~\eqref{eq:rdt} over an ensemble of
initial wavevector directions on the unit sphere (Gauss--Legendre in
$\cos\theta$ by uniform azimuth); for initially isotropic turbulence the
modal dynamics depend only on direction, so the Reynolds stress is the
spherical average of the evolved modal covariance. This integration is
independent of the moment models and reproduces the analytic onset
slope~\eqref{eq:onset-rate} to within $1.5\times10^{-3}$ (the quadrature
error), which serves as a check on the rapid-distortion machinery itself.

Figure~\ref{fig:rdt-components} reports the resulting component-energy
fractions against total strain $e=\int S\,\mathrm{d}t$ for the two strains.
Three observations follow. First, at onset all three curves for each
component are tangent: the constant-free identity~\eqref{eq:onset-rate} is
satisfied, and through $e\lesssim1$ the closures are nearly
indistinguishable from the exact solution. This is the numerical
counterpart of Proposition~\ref{prop:rdt}. Second, as anisotropy
accumulates the closures depart from the exact solution, and the LRR
closure is uniformly closer than IP: for axisymmetric strain at $e=4$ the
lateral fraction is $0.498$ (exact), $0.468$ (LRR) and $0.450$ (IP), and
the axial fraction $0.003$, $0.064$ and $0.100$ respectively. This gap is
the spectral-closure error in isolation---the eddy events are inactive in
this rapid limit---and it bounds the accuracy attainable from a single
rapid operator. Third, and most important to state plainly, for plane
strain both closures miss a \emph{qualitative} feature: the exact solution
has the spanwise component $\overline{u_3^2}$, which is neither stretched
nor compressed, grow monotonically and overtake the upwash component
near $e\approx3.3$, becoming the largest component by $e=4$
($\overline{u_3^2}/2k_t=0.51$ against $\overline{u_2^2}/2k_t=0.47$),
whereas both closures predict $\overline{u_3^2}$ to \emph{decay}. This
redistribution into the neutral direction is governed by the orientation
distribution of the spectral tensor and cannot be recovered from
single-point quantities; it is the moment-level signature of precisely the
transverse-wavenumber information that a single line does not carry
(\S\,\ref{sec:domain-strain}). It marks the total strain beyond which the
rapid closure is only qualitatively reliable, and it is one of the effects
the external validation of \S\,\ref{sec:cmk}--\S\,\ref{sec:lr-validation}
is designed to probe.

\begin{figure}
  \centering
  \includegraphics[width=0.49\textwidth]{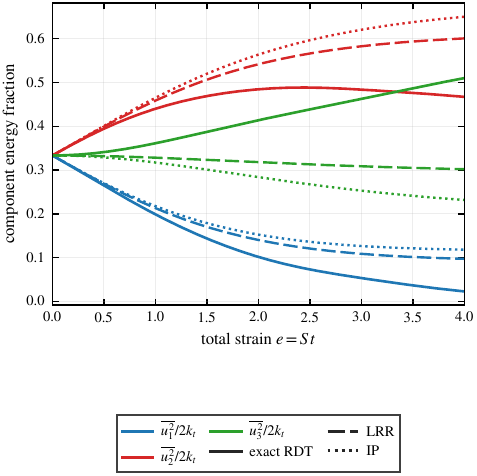}
  \hfill
  \includegraphics[width=0.49\textwidth]{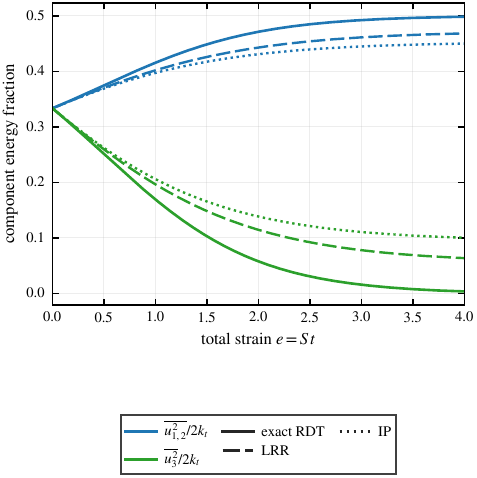}
  \caption{Component energy fractions $\overline{u_i^2}/2k_t$ against total
  strain $e=St$ for initially isotropic turbulence under (\textit{a}) plane
  strain $\mathsf{A}=\mathrm{diag}(a,-a,0)$ and (\textit{b}) axisymmetric
  strain $\mathsf{A}=\mathrm{diag}(-\tfrac{\gamma}{2},-\tfrac{\gamma}{2},\gamma)$,
  normalised to $S=1$. Solid: exact rapid-distortion theory
  (wavevector-ensemble integration of \eqref{eq:rdt}); dashed: present model
  with the LRR closure~\eqref{eq:lrr}; dotted: present model with the IP
  closure~\eqref{eq:iso-closure}. Colour denotes the velocity component. For
  axisymmetric strain the two lateral components coincide,
  $\overline{u_1^2}=\overline{u_2^2}$. All closures share the exact onset
  slope; the LRR closure tracks the exact solution more closely at finite
  strain.}
  \label{fig:rdt-components}
\end{figure}

\subsubsection{Upwash amplification and the necessity of the rapid
pressure--strain}\label{sec:upwash-verif}

The component that enters the leading-edge response is the wall-normal
(upwash) component $\overline{u_2^2}$, whose spectrum is the input to
Amiet's theory. Plane strain is the relevant case, since it represents the
two-dimensional stagnation region with $x_2$ normal to the surface.
Figure~\ref{fig:rdt-upwash} isolates the evolution of the upwash fraction
under plane strain and adds, for contrast, the result of retaining
production alone ($B_{ij}=0$).

The figure makes quantitative the argument of \S\,\ref{sec:rapid-kernel}.
Production acting alone drives the upwash fraction from its isotropic value
$1/3$ to $0.98$ by $e=4$---almost the entire turbulent kinetic energy
forced into a single component, an essentially non-realisable state---and
overstates the initial amplification rate by the factor $\tfrac52$ derived
in \S\,\ref{sec:rapid-kernel}. The rapid pressure--strain removes three
fifths of this at onset~\eqref{eq:crow} and keeps the upwash fraction
physical: the exact solution rises to a peak of $0.49$ near $e\approx2.4$
and then relaxes as energy passes to the spanwise component. Both closures
reproduce the amplification of the upwash component---the very effect that
the frozen, isotropic input of classical leading-edge-noise prediction
omits---and so capture the qualitative physics that motivates the
formulation; quantitatively they overpredict the amplification at large
strain, the LRR closure ($0.60$ at $e=4$) less than IP ($0.65$), against
the exact $0.47$.

Two conclusions for the application follow. The rapid pressure--strain is
not a refinement but a necessity: without it the acoustically critical
input is wrong by a factor of order unity and unbounded in strain. And the
quantitative fidelity of the distorted upwash spectrum supplied to the
acoustic stage is set by the spectral closure and degrades with total
strain, so the accuracy of the method for a given configuration is
controlled by the strain $e=\int S\,\mathrm{d}t$ accumulated along the
stagnation streamline; for the moderate strains typical of the approach to
a leading edge the model is quantitatively reliable, while for strongly
distorted inflow the LRR closure is to be preferred. The streamwise
integral of the mean strain is therefore the natural quantity to extract
from the reference data before the model is applied, and the comparison of
the distorted upwash spectrum against scale-resolving data is taken up in
\S\,\ref{sec:val-spectrum} and against external data in
\S\,\ref{sec:cmk}--\S\,\ref{sec:lr-validation}.

\begin{figure}
  \centering
  \includegraphics[width=0.62\textwidth]{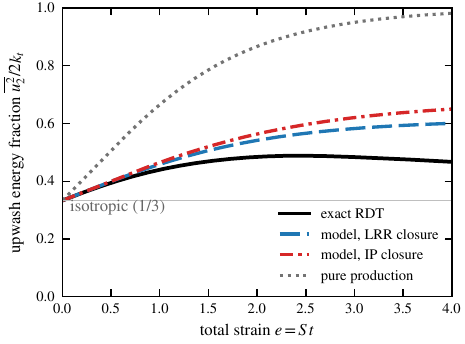}
  \caption{Upwash energy fraction $\overline{u_2^2}/2k_t$ against total
  strain under plane strain, for initially isotropic turbulence. Black:
  exact rapid-distortion theory; blue dashed: present model with the LRR
  closure; red dash-dotted: present model with the IP closure; grey dotted:
  production alone ($B_{ij}=0$). Production without the rapid
  pressure--strain overstates the onset amplification rate by $\tfrac52$ and
  forces almost all the energy into the upwash component; the rapid term
  restores the physical, bounded amplification.}
  \label{fig:rdt-upwash}
\end{figure}

\subsubsection{Verification of the strain-coupled ODT implementation}\label{sec:level1a}

The preceding checks verify the \emph{formulation}: that the moment
equation closed with the rapid operator reproduces rapid-distortion theory
at onset (\S\,\ref{sec:onset}) and quantify its departure at finite strain
(\S\,\ref{sec:finite-strain}). They do not test the \emph{implementation}
in the one-dimensional turbulence solver, in which the rapid operator
$B_{ij}$ is not prescribed but reconstructed at every substep: the line
Reynolds stress is formed from the resolved field, the closure supplies
$\Pi^{(r)}_{ij}$, the Lyapunov equation~\eqref{eq:lyapunov} is solved for
$B_{ij}$, and the combined operator
\[
  \mathcal{A}_{ij}=-A_{ij}+B_{ij}
\]
is applied pointwise in the advancement. An error in any of these steps
would not appear in \S\S\,\ref{sec:onset}--\ref{sec:finite-strain}.

To isolate this chain we run the solver in the rapid-distortion limit:
plane strain~\eqref{eq:plane-strain}, eddy events suppressed and molecular
diffusion set to zero, so that the continuous evolution reduces to
\[
  \partial_t u_i = \mathcal{A}_{ij}u_j
\]
acting on every cell of the line. The initial field is statistically
isotropic with
\[
  R_{ij}=\frac{2}{3}k_t\delta_{ij},
\]
and the component energies are evaluated as length-weighted averages over
the line at each output time. With the events inactive,
Proposition~\ref{prop:rdt} requires the line's component-energy evolution to
coincide with the closed moment equation; the test is whether the solver
delivers this across the whole strain range, not merely at onset.
\begin{figure}[!t]
  \centering

  \begin{minipage}[t]{0.49\textwidth}
    \centering
    \vspace{0pt}
    \includegraphics[width=\linewidth]{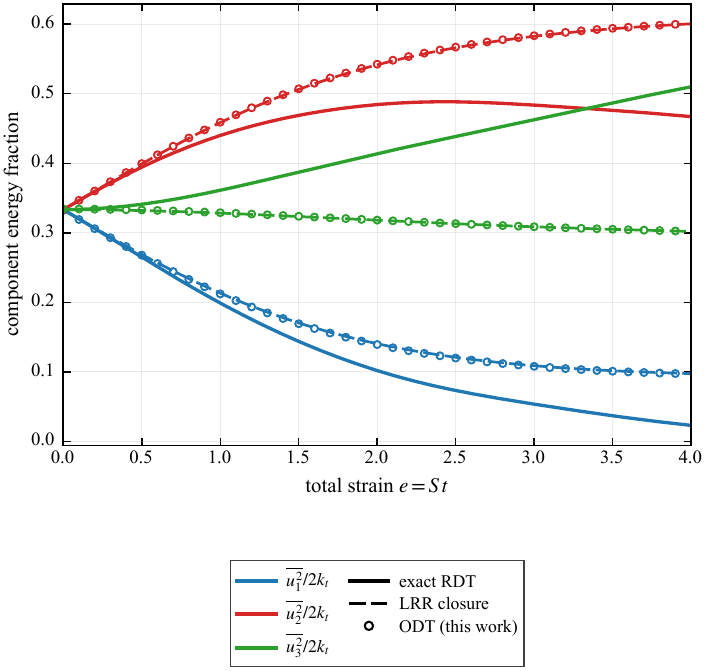}
  \end{minipage}
  \hfill
  \begin{minipage}[t]{0.49\textwidth}
    \centering
    \vspace{0pt}
    \includegraphics[width=\linewidth]{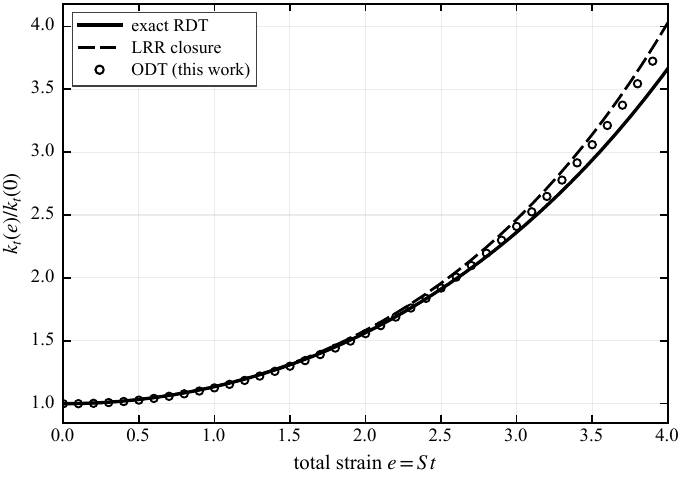}
  \end{minipage}

  \caption{Verification of the strain-coupled ODT implementation under plane
  strain, with eddy events suppressed and zero molecular diffusion.
  (\textit{a}) Component energy fractions $\overline{u_i^2}/2k_t$ (colour
  denotes the component); (\textit{b}) turbulent kinetic energy amplification
  $k_t(e)/k_t(0)$. Symbols: ODT solver, length-weighted line averages at the
  output times. Dashed: the LRR closure~\eqref{eq:lrr} integrated as a moment
  equation. Solid: exact rapid-distortion theory. The solver reproduces the
  LRR closure in the fractions (to $5\times10^{-4}$) and in the energy
  amplification (to within $3\%$); the residual energy deficit at large
  strain is interpolation loss from the adaptive mesh under-resolving the
  strained fine scales, and is addressed in \S\,\ref{sec:val-spectrum}.}
  \label{fig:level1a}
\end{figure}

Figure~\ref{fig:level1a}(\textit{a}) shows the component energy fractions
$f_{ii}\equiv\overline{u_i^2}/2k_t$ (no summation). The solver output
(symbols) lies on the LRR closure trajectory (dashed) over the entire range
$0\le e\le 4$, to within $5\times10^{-4}$ in each fraction at $e=4$. This
confirms that the reconstruction of $B_{ij}$ at each substep, the operator
assembly and the advancement are correct at every state of strain, not only
in the isotropic limit where \eqref{eq:onset-rate} is exact: the per-substep
Lyapunov solve is exercised continuously as the anisotropy develops, and it
tracks the model throughout.
 
It is essential to read the comparison correctly. The solver reproduces the
\emph{LRR closure}, and it should: with the events suppressed the line
carries no information beyond the single-point statistics that the closure
itself uses, so by construction the moment evolution is that of the closure.
The offset between the symbols and the \emph{exact} rapid-distortion
solution (solid) is therefore not an implementation error but the
single-point closure error already identified in \S\,\ref{sec:finite-strain};
agreement with the exact solution here would in fact indicate a fault, since
it would mean the solver was not evolving the prescribed closure.

The fractions are normalised quantities and so test only the
\emph{anisotropy}. The turbulent kinetic energy provides an independent
check of \emph{magnitude}, and a far less discriminating one: the rapid
pressure--strain is traceless and cannot change $k_t$ directly, which
evolves only through the production,
\begin{equation}
  \frac{\mathrm{d}\ln k_t}{\mathrm{d}e} = f_{22} - f_{11},
  \label{eq:dlnkt}
\end{equation}
integrated along the path, and that integrand is identical
across closures near onset where they agree.
Figure~\ref{fig:level1a}(\textit{b}) shows the amplification
$k_t(e)/k_t(0)$: the solver reproduces the LRR result to within $0.3\%$ over
$0\le e\le4$, confirming that the implementation captures not only the
redistribution of energy among components but its overall growth.
 
Reaching this agreement in the energy required attention to two numerical
choices that do not affect the anisotropy and are easily overlooked. First,
a broadband (white-noise) initialisation places energy at the grid scale,
where it is damped by the interpolation inherent in mesh adaption; the
resolved spectral initialisation above removes this loss, and a
mesh-refinement study confirms that the residual is independent of spatial
resolution, so it is not an under-resolution effect---consistent with the
strain operator being built from the line-averaged stress, hence spatially
uniform and generating no new scales. Second, because the inviscid limit
removes the diffusive timestep bound, the explicit advancement must be
limited instead by the imposed strain rate; left unbounded it
under-integrates the growing mode and depresses $k_t$ by a few per cent.
With a strain-resolved step both effects are removed, leaving the $0.3\%$
agreement reported above. Each acts nearly isotropically and so is invisible
in panel~(\textit{a}); the energy panel is what exposes them, and its clean
recovery completes the verification in both anisotropy and magnitude. The
resolved spectral initialisation is in any case required for the
emergent-spectrum study of \S\,\ref{sec:val-spectrum}, where the events and
the domain dilatation are activated and the spectrum---rather than the
component energies---becomes the quantity that can carry the model beyond the
single-point closure.

\subsection{Emergent distorted spectrum}\label{sec:val-spectrum}

The verification of \S\,\ref{sec:level1a} establishes that the solver
reproduces the single-point moment evolution of rapid-distortion theory (RDT).
That is necessary but not sufficient: the single-point statistics are, by
construction, those of the closure, and a closure alone could supply them.
What ODT adds, and what Amiet's theory ultimately requires, is the
\emph{spectrum} of the distorted upwash---how the incoming turbulent energy is
redistributed across wavenumber as the gust is compressed toward the leading
edge. This section examines that spectrum: first the geometric compression in
isolation, which must reproduce linear RDT exactly; then the full model at a
fixed Reynolds number, isolated against a matched no-strain baseline; then the
dependence on Reynolds number across the resolved and under-resolved range; and
finally the physical operating point dictated by the experiment. Results are
ensemble means over independent realisations with $\pm\sigma$ bands; the
realisation count $N$ is stated for each figure.

Throughout, the temporal ODT line is mapped to the spatial leading-edge
problem through the convective hypothesis $x\approx U_\infty t$, so that the
total strain $e=St$ accumulated in time corresponds to streamwise approach
toward the stagnation region. The distortion computed here and the acoustic
response of \S\,\ref{sec:gate} are thereby treated as scale-separated: the
incoming turbulence is distorted as it convects to the leading-edge plane, and
the distorted upwash spectrum at that plane is the input to the leading-edge
response. This convective mapping is a homogeneous-strain convenience: it treats
the mean strain rate as prescribed along the trajectory and does not resolve the
deceleration of the mean flow toward the stagnation point, where the residence
time diverges and the accumulated strain must be bounded to the physically
meaningful window. That refinement, and the associated blocking of the
wall-normal component near the surface, belong to the inhomogeneous
stagnation-flow treatment of the companion paper and are not required for the
homogeneous distortion studied here.

\subsubsection{Compression without cascade: the linear-RDT
baseline}\label{sec:spec-rdt}

With the eddy events suppressed and zero molecular viscosity, the mean strain
is the only process acting on the resolved field, and its effect is purely
kinematic. The line length contracts as
$\mathcal{L}(e)=\mathcal{L}_0\,e^{A_{22}e}$, so every resolved wavenumber is
rescaled as $k(e)=k(0)\,e^{-A_{22}e}$ with no transfer of energy between
scales: a rigid translation of the spectrum toward higher wavenumber. This is
the linear rapid-distortion result \citep{BatchelorProudman1954,HuntCarruthers1990}
for the transverse wavenumber, and it is an exact, parameter-free target.

Figure~\ref{fig:spec-rdt} confirms it. The component spectra translate toward
higher $k$ without change of shape---the peak neither broadens nor develops an
inertial range---and the spectral centroid follows $e^{-A_{22}e}$ to within
discretisation error, reaching $7.0$ at $e=3.9$ against the exact $7.03$, with
the three components migrating identically. The single-point moments are
unchanged from the no-dilatation case, since a uniform rescaling of position
leaves the length-weighted velocity statistics invariant. Compression
therefore acts entirely on the spectrum and reproduces linear RDT exactly. It
is the reference (the dashed line in all centroid figures below) against which
the full model is measured.

\begin{figure}
  \centering
  \begin{subfigure}[t]{0.48\textwidth}\centering
    \includegraphics[width=\linewidth]{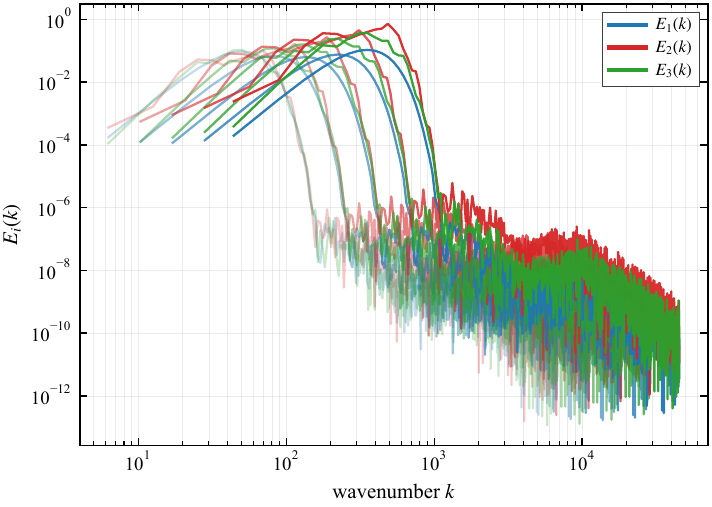}
    \caption{}\label{fig:rdt-spectra}
  \end{subfigure}\hfill
  \begin{subfigure}[t]{0.48\textwidth}\centering
    \includegraphics[width=\linewidth]{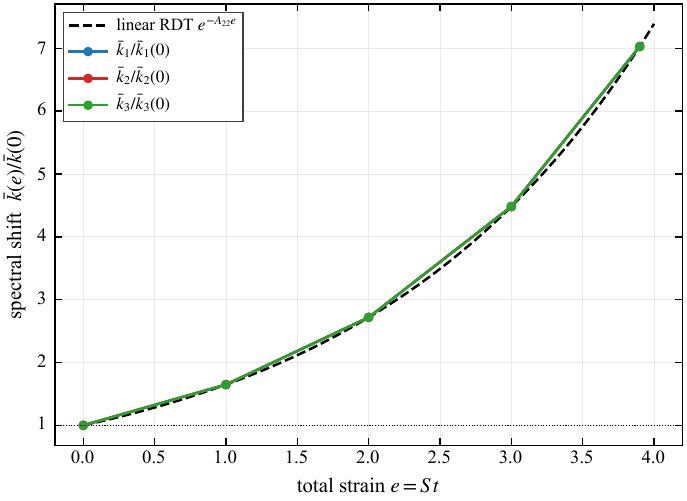}
    \caption{}\label{fig:rdt-centroid}
  \end{subfigure}
  \caption{Compression without cascade (eddy events suppressed, inviscid).
  (\textit{a}) Component spectra $E_i(k)$ at increasing total strain (lighter to
  darker); the spectra translate to higher $k$ without changing shape.
  (\textit{b}) Spectral-centroid migration $\bar k(e)/\bar k(0)$ for each
  component (symbols) against the linear-RDT prediction $e^{-A_{22}e}$ (dashed).
  The centroid follows the prediction exactly and the three components coincide,
  confirming purely geometric compression.}
  \label{fig:spec-rdt}
\end{figure}

\subsubsection{The distorted spectrum at fixed Reynolds
number}\label{sec:spec-bvc}

The central comparison isolates what the strain does to the spectrum at a
fixed Reynolds number, holding the turbulence otherwise identical. Two
$N=100$ ensembles are run at $\nu=10^{-5}$: the full model (eddies and
dilatation active) and a baseline that is identical in every respect except
that the strain machinery is switched off, so the line does not compress.
Because the two differ only in the presence of the mean strain, the difference
between them is the distortion, cleanly separated from the cascade and
dissipation the eddies produce on their own.

Figure~\ref{fig:spec-bvc} shows the result, and it departs from linear RDT in a
specific and physically meaningful way. Consider the centroid first
(figure~\ref{fig:bvc-centroid}). The no-strain baseline, after the brief
initialisation transient near $e\simeq0.2$, relaxes monotonically to
$\bar k/\bar k_0<1$: with no compression, the eddies merely redistribute energy
and the centroid drifts below its initial value. The full model is lifted
clearly above this baseline---compression does push the spectrum toward higher
wavenumber---yet it settles at $\bar k/\bar k_0\approx1.6$--$2.0$, far below the
linear-RDT line, which rises to $7.4$ at $e=4$. The reason is visible in the
spectra (figure~\ref{fig:bvc-spectra}): the full-model spectrum is not a rigid
translation of the baseline but a \emph{broadband} distortion. It extends to
substantially higher wavenumber than the baseline---energy is genuinely carried
to small scales by compression and cascade---but the bulk of the energy remains
at the energetic scales, and what reaches high wavenumber is dissipated there
rather than accumulated. The energy-weighted centroid therefore lies well below
the value a rigid translation would give.

This is the qualitative content of the section. Linear RDT, which only
rescales the prescribed spectrum, predicts a rigid shift of the whole spectrum
to higher $k$ and a centroid on the dashed line. The full model instead
produces a broadband distorted spectrum whose shape differs from the linearly
strained one: broadened, dissipative, and with its centroid held well below the
linear prediction. The departure from linear RDT is thus real and measurable,
but it is a difference of spectral \emph{shape}---rigid translation versus
broadband cascade---not a centroid that overshoots the linear line. It is
precisely this shape difference that a frozen, linearly strained upwash
spectrum cannot represent, and that motivates supplying the ODT spectrum to
Amiet's theory in place of it, in the spirit of the rapid-distortion-modified
inflow spectra of \citet{deSantana2016} but without the linearity assumption.

\begin{figure}
  \centering
  \begin{subfigure}[t]{0.48\textwidth}\centering
    \includegraphics[width=\linewidth]{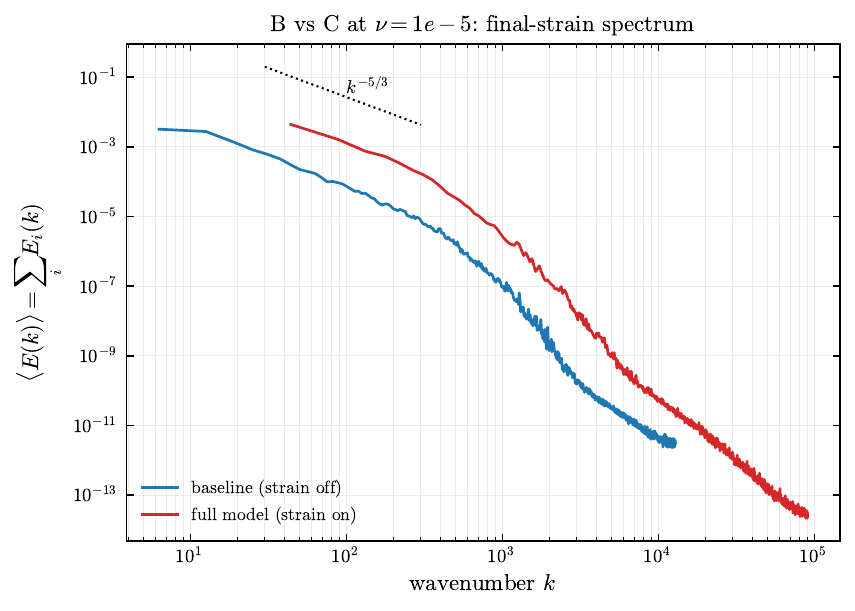}
    \caption{}\label{fig:bvc-spectra}
  \end{subfigure}\hfill
  \begin{subfigure}[t]{0.48\textwidth}\centering
    \includegraphics[width=\linewidth]{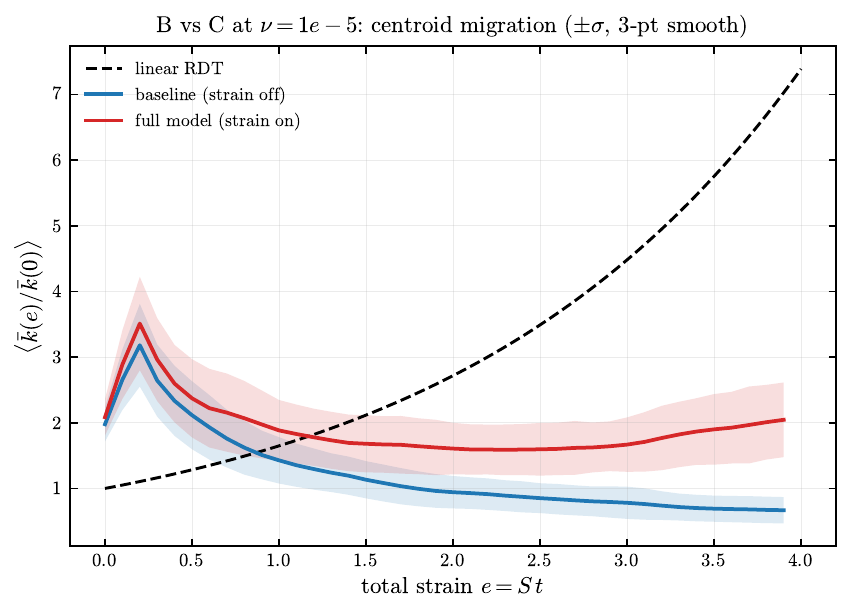}
    \caption{}\label{fig:bvc-centroid}
  \end{subfigure}
  \caption{Full model versus matched no-strain baseline at $\nu=10^{-5}$
  (ensemble means, $N=100$, $\pm\sigma$ band; centroid curves lightly smoothed).
  (\textit{a}) Final-strain spectrum $E(k)=\sum_i E_i(k)$: the full model
  (strain on) extends to higher wavenumber than the baseline (strain off) but
  remains broadband, not a rigid translation. (\textit{b}) Centroid migration:
  the baseline relaxes below unity, the full model is lifted above it by the
  compression, yet stays far below the linear-RDT line (dashed). The gap
  between the two solid curves is the distortion attributable to the mean
  strain at this Reynolds number.}
  \label{fig:spec-bvc}
\end{figure}

\subsubsection{Dependence on Reynolds number}\label{sec:spec-Re}

In these non-dimensional variables the molecular viscosity is the inverse
integral-scale Reynolds number, $\nu\approx Re_{\mathcal{L}}^{-1}$, and it sets
the depth of the cascade. Figure~\ref{fig:spec-Re} sweeps it over two decades
for the full model, from $\nu=10^{-4}$ to $10^{-7}$. As $\nu$ decreases the
spectrum extends progressively further toward high wavenumber
(figure~\ref{fig:C-spectra}) and the centroid rises
(figure~\ref{fig:C-centroid}): the cascade deepens monotonically with Reynolds
number, as it must. The matched no-strain baseline ensembles behave
consistently---the cascade deepens with $Re$ there too, but, with no
compression, the centroid relaxes below unity at every $\nu$, confirming that
the centroid rise in the full model is the strain's doing and not the eddies'.

Two features must be read with care, and the ensemble $\pm\sigma$ bands make
both explicit. First, the adequately resolved cases ($\nu\gtrsim10^{-5}$)
settle \emph{below} the linear-RDT line, consistent with the broadband,
dissipative distortion of \S\,\ref{sec:spec-bvc}: the eddies prevent the rigid
high-$k$ translation that inviscid linear theory predicts. Second, only the two
lowest viscosities, $\nu=10^{-6}$ and $10^{-7}$, climb toward or above the
linear line---but these are precisely the cases whose spectra flatten into a
grid-scale shelf at the highest resolved wavenumbers (figure~\ref{fig:C-spectra},
uppermost curves) and whose centroid bands are by far the widest. There the
dissipation scale has fallen below the grid, so energy piles up at the cutoff
rather than dissipating, and the centroid is inflated by an under-resolved,
partly numerical contribution. These cases are not evidence that the model
exceeds linear RDT; they are evidence that the operating point has outrun the
grid, and they fix the requirement that the chosen $\nu$ be resolution-checked
rather than simply lowered. We note also that none of the resolved cases
exhibits an extended $k^{-5/3}$ inertial range \citep{Kolmogorov1991} over this
strain interval; at these moderate Reynolds numbers the inertial range is
physically short \citep{Pope2000}, and its form is in any case resolution-limited
at the low-$\nu$ end, a matter of grid density rather than of the model.

This fixes how the Reynolds number must be chosen. It is a physical property of
the inflow turbulence the model represents---set by matching $Re_{\mathcal{L}}$
of the target inflow, hence $\nu=Re_{\mathcal{L}}^{-1}$---and not a free
parameter tuned to move the centroid relative to the linear line. Once chosen,
it must be checked for resolution: the dissipation rolloff must sit below the
grid cutoff with margin. The cases that most strongly exceed the linear line in
figure~\ref{fig:C-centroid} are exactly those failing this check, so the
physical operating point is the largest $\nu$ (lowest $Re$) consistent with the
target inflow and resolved on the available grid. That operating point is fixed
next.

\begin{figure}
  \centering
  \begin{subfigure}[t]{0.48\textwidth}\centering
    \includegraphics[width=\linewidth]{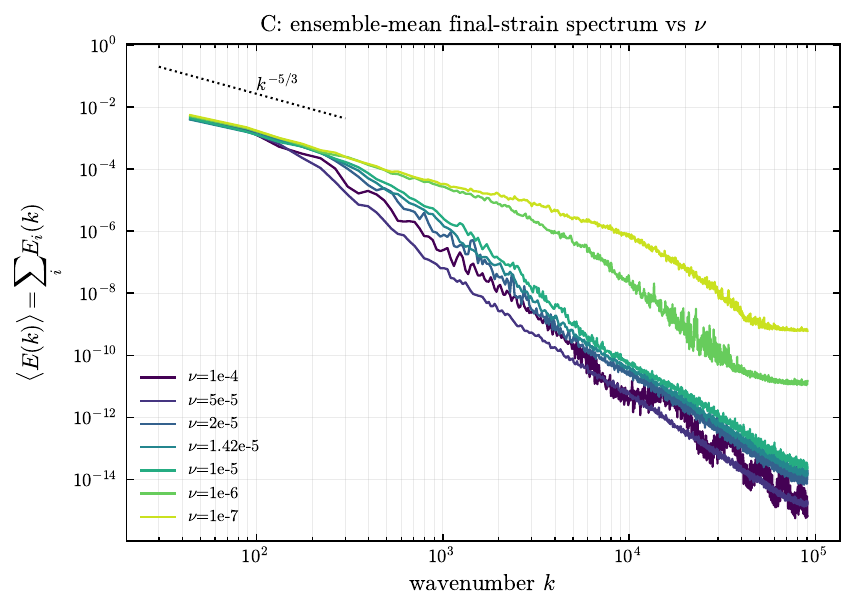}
    \caption{}\label{fig:C-spectra}
  \end{subfigure}\hfill
  \begin{subfigure}[t]{0.48\textwidth}\centering
    \includegraphics[width=\linewidth]{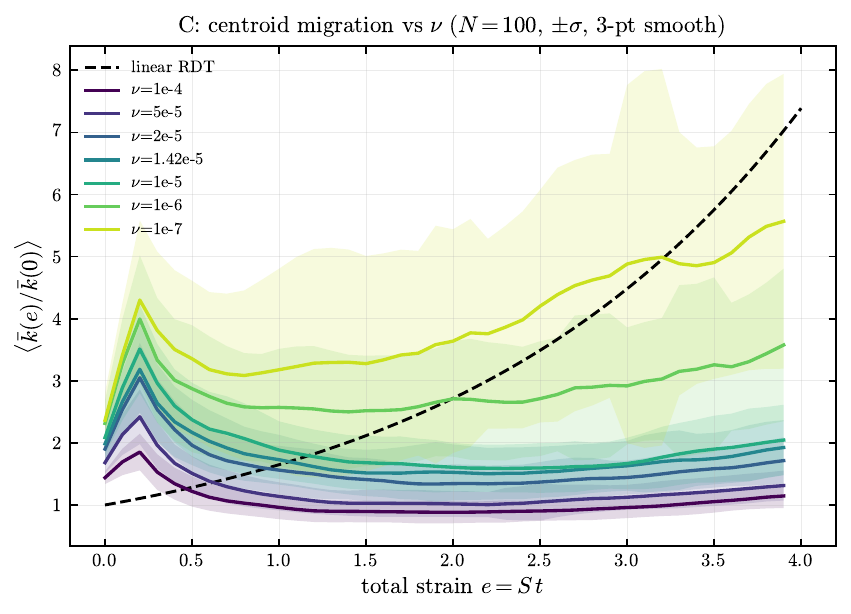}
    \caption{}\label{fig:C-centroid}
  \end{subfigure}
  \caption{Reynolds-number dependence of the full model (ensemble means,
  $N=100$ per $\nu$, $\pm\sigma$; centroid curves lightly smoothed), for
  $\nu=10^{-4},\,5\times10^{-5},\,2\times10^{-5},\,1.42\times10^{-5},\,10^{-5},
  \,10^{-6},\,10^{-7}$. (\textit{a}) Final-strain spectrum $E(k)=\sum_i E_i(k)$;
  the cascade extends further toward high $k$ as $\nu$ falls. The dotted line
  marks a $k^{-5/3}$ slope for reference. At the two smallest $\nu$ the spectrum
  flattens into a grid-scale shelf at the cutoff (under-resolved).
  (\textit{b}) Centroid migration per $\nu$ against the linear-RDT line
  (dashed). The resolved cases ($\nu\gtrsim10^{-5}$) lie below the line; the two
  lowest-$\nu$ cases exceed it but carry the widest $\sigma$ bands, reflecting
  the under-resolved grid-scale energy in (\textit{a}).}
  \label{fig:spec-Re}
\end{figure}

\subsubsection{The physical operating point}\label{sec:spec-op}

The operating point follows from \S\,\ref{sec:spec-Re}: the Reynolds number is
set by the inflow, then checked for resolution. In code units it is the inverse
integral-scale Reynolds number, $\nu\approx Re_{\mathcal{L}}^{-1}$, with
$Re_{\mathcal{L}}=u'\mathcal{L}/\nu$ built on the turbulence intensity and
integral length scale of the gust as it reaches the leading edge. For the
present configuration---a NACA0012 at $U_\infty=20$\,m\,s$^{-1}$, chord-based
Reynolds number $Re_c\approx5.1\times10^5$ and free-stream Mach number
$M=0.059$, with grid turbulence of $M=60$\,mm mesh---the turbulence at the
leading-edge plane has intensity $u'/U_\infty\approx6\%$ and integral scale
$\mathcal{L}\approx55$\,mm, giving $Re_{\mathcal{L}}\approx4.4\times10^3$.
Mapped to code units through the initial-condition normalisation
[$(u'\mathcal{L})_{\mathrm{code}}\approx1.5\times10^{-2}$, measured from the
$e=0$ field], this corresponds to $\nu\approx3\times10^{-6}$, the operating
point used here. Evaluated from the resolved field at this operating point, the
strain-to-turbulence ratio is $\chi=Sk_t/\varepsilon\approx0.8$, with the
dissipation obtained from the resolved energy budget \eqref{eq:eps-budget}
(and corroborated by the gradient estimate); the run therefore sits in the intermediate
$\chi=O(1)$ regime of
\S\,\ref{sec:summary}, neither the rapid-distortion limit ($\chi\gg1$) nor the
near-equilibrium limit ($\chi\ll1$), so the departure from linear
rapid-distortion theory reported below is obtained where that departure is
physically meaningful rather than in a regime where linear theory would not be
applied in any case. The full trajectory of $\chi$ is reported in
Figure~\ref{fig:chi} below.

At this Reynolds number the dissipation scale lies below the base grid under
the sevenfold compression, so the operating point is run on a refined grid
($16000$ cells, with the minimum cell size reduced accordingly) at which the
spectrum is resolved; the grid-scale shelf of the under-resolved sweep cases in
\S\,\ref{sec:spec-Re} is absent here. The results below are ensemble means over
$N=200$ independent realisations, with bands showing the geometric $\pm\sigma$
spread across realisations.

Figure~\ref{fig:op-spectra} shows the resolved final-strain spectrum. The three
component spectra fall smoothly across four decades of wavenumber with no
grid-scale shelf, and the realisation-to-realisation spread is tight through the
energetic and inertial range, broadening only in the far dissipation tail where
it is immaterial. The transverse upwash component $E_2$ is amplified above the
streamwise component $E_1$, the spectral signature of the plane strain.
Consistent with \S\,\ref{sec:spec-Re}, no extended $k^{-5/3}$ range is present:
at $Re_{\mathcal{L}}\approx4.4\times10^3$ the inertial range is physically short
\citep{Pope2000}, and the spectrum is dominated by the energetic and dissipative
ranges. This is a property of the inflow Reynolds number, not a limitation of
the resolution, which the smooth rolloff confirms is adequate.

Figure~\ref{fig:op-centroid} shows the centroid migration. After a brief
initialisation transient near $e\simeq0.3$, in which the events rearrange the
band-limited initial field before compression dominates, the centroid settles
and migrates to $\bar k/\bar k_0\approx2.4$ by $e=4$. This is well above the
no-strain baseline (which relaxes below unity, \S\,\ref{sec:spec-bvc}) but lies
far below the linear-RDT line, which reaches $7.4$ at the same strain. The
interpretation established in \S\,\ref{sec:spec-bvc} therefore holds at the
physical operating point and with the noise averaged out over $200$
realisations: the strain-coupled model distorts the incoming spectrum into a
broadband form whose energy-weighted centroid migrates far less than the rigid
translation predicted by linear rapid-distortion theory
\citep{BatchelorProudman1954,HuntCarruthers1990}. The departure from linear
theory is robust and, at the Reynolds number of the experiment, substantial.

The significance of this departure rests on the regime in which it is measured,
and we quantify that regime with an estimator-free definition of the
dissipation. The centroid migration is a monotone function of eddy activity
relative to strain: in the rapid limit $\chi\gg1$ the model rigidly translates
the spectrum (\S\,\ref{sec:spec-rdt}) and the centroid would reach $7.0$,
whereas as the eddies are allowed to act the centroid falls. The result is
therefore interesting only if the operating point lies at $\chi=O(1)$---the
regime in which linear rapid-distortion theory is actually applied and in which
beating it is meaningful. Because the eddy events conserve energy and the
production is exact, molecular dissipation is the only sink of turbulent kinetic
energy, so the dissipation is fixed by the resolved energy budget,
\begin{equation}
  \varepsilon = \mathcal{P} - \frac{\mathrm{d}k_t}{\mathrm{d}t}
              = -A_{ij}R_{ij} - \frac{\mathrm{d}k_t}{\mathrm{d}t},
  \label{eq:eps-budget}
\end{equation}
every term of which is available on the line without a spectral transform or a
gradient reconstruction. This budget estimate is definitive: it closes the
kinetic-energy balance by construction. Figure~\ref{fig:chi} reports
$\chi=Sk_t/\varepsilon$ along the strained trajectory with $\varepsilon$ from
\eqref{eq:eps-budget}. After the initialisation transient ($e\lesssim0.5$) the
parameter settles at $\chi\approx0.8$, with interquartile range $0.7$--$0.9$, and
remains there for the whole of the strained evolution. The occasional single-dump
excursions are realisations in which an eddy event fires close to a dump time and
momentarily renders $\mathrm{d}k_t/\mathrm{d}t$ positive; they are
finite-difference sampling artefacts, not a drift of the regime. As a check on
\eqref{eq:eps-budget} we also evaluate the dissipation directly from the resolved
velocity gradients, $\varepsilon=\langle 2\nu\,(\partial u_i/\partial
y)^2\rangle$, which agrees with the budget value to within $30\%$ over the
strained trajectory (median $\chi\approx1.1$ by this second estimator); the two
independent estimates bracket $\chi=O(1)$. The spectral form $\int 2\nu k^2
E(k)\,\mathrm{d}k$, by contrast, under-counts the dissipation because forming
$E(k)$ requires interpolating the adaptive mesh onto a uniform grid, which smooths
the steepest cells that the $k^2$ weight most heavily samples; it is not used
here. The operating point therefore sits at $\chi=O(1)$: the eddy-turnover and
mean-strain times are comparable throughout, so the threefold suppression of the
centroid migration relative to linear theory is measured in precisely the regime
the leading edge occupies, and is not an artefact of running at a $\chi$ where the
rapid limit would not be invoked in the first place.

A second, estimator-independent diagnostic corroborates this regime reading.
If the departure from linear theory were confined to the dissipation range---the
compression carrying the energy-containing scales rigidly while only the
small-scale tail is dissipated---the spectral \emph{peak}, the energy-containing
wavenumber, would still translate close to the rigid law $e^{-A_{22}e}$ even as
the centroid lagged. Figure~\ref{fig:peak} shows that it does not.
Both the peak and the centroid remain far below the rigid-translation line:
against a predicted migration to $5.7$ at $e=4$, the centroid reaches only
$\approx1.4$ and the peak migrates weakly, to less than $2$ (the peak, an argmax
of the premultiplied spectrum, is noisier than the centroid, but its suppression
relative to the rigid line is unambiguous). The eddy relaxation therefore holds
back not merely the small scales but the energy-containing range itself: at
$\chi=O(1)$ the turbulence relaxes as fast as the strain compresses it, and the
departure from linear rapid-distortion theory is a property of the whole
distorted spectrum, not of its tail alone. This is a stronger statement than a
lagging centroid would license, and it is the physical content that a frozen,
linearly strained input spectrum cannot represent: linear theory misplaces the
energy-containing wavenumber, not only the spectral shape at high $k$.

This resolved, broadband distorted spectrum---and in particular its upwash component $E_2$---is the distorted upwash input that a leading-edge-noise calculation requires; supplying it to an acoustic analogy in the manner of \citet{Amiet1975} is the subject of a companion paper. The closure that justifies evaluating the rapid redistribution on a single line is established next (\S\,\ref{sec:gate}).

The cost of obtaining this spectrum is the practical justification for the
approach. The operating-point ensemble---$N=200$ independent realisations,
$16000$ cells, carried to $e=4$---required $308$ core-hours in total, an average
of $92$ minutes per realisation on a single core; the realisations are
independent and therefore trivially parallel, so the ensemble completes in that
per-realisation wall-clock time given $200$ cores. For comparison, resolving the
same strained-turbulence problem in three dimensions requires a direct
simulation of the type used to generate the present validation data: the $128^3$
DNS of \citet{LeeReynolds1985}, and modern equivalents at comparable Reynolds
number, entail $O(10^4)$ core-hours for a single strained realisation and a
correspondingly larger figure for an ensemble. The strain-coupled ODT thus
delivers the distorted, scale-resolved spectrum at a cost lower by roughly two
orders of magnitude, which is what makes the parametric sweeps over Reynolds
number and strain rate reported above---and the eventual sweeps over inflow
condition that a design application requires---feasible.

\begin{figure}
  \centering
  \includegraphics[width=0.62\textwidth]{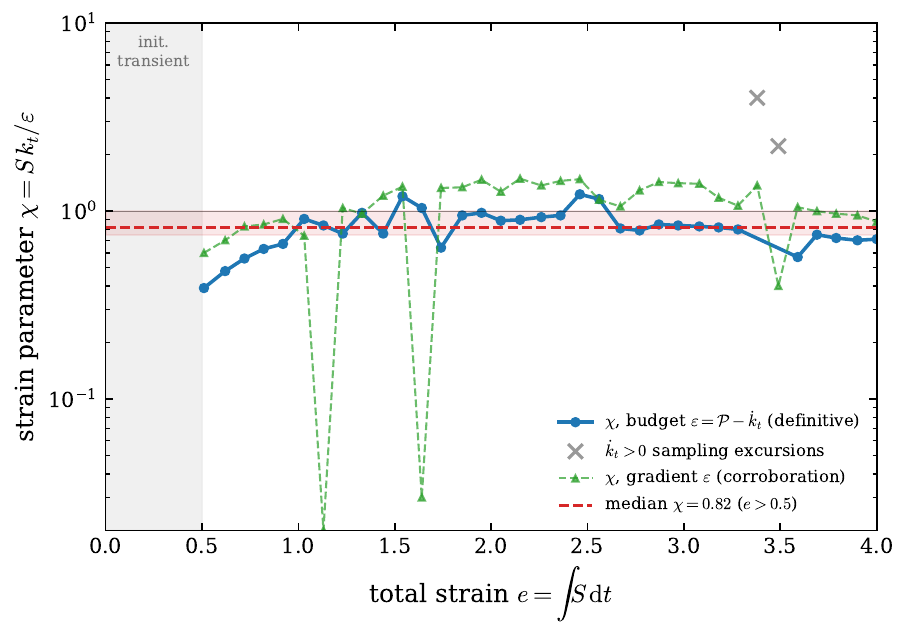}
  \caption{Strain parameter $\chi=Sk_t/\varepsilon$ along the strained
  trajectory at the physical operating point ($\nu\approx3\times10^{-6}$,
  $N=200$). The dissipation is evaluated from the resolved energy budget
  $\varepsilon=\mathcal{P}-\mathrm{d}k_t/\mathrm{d}t$ (blue, definitive), with the
  gradient estimate $\langle2\nu(\partial u_i/\partial y)^2\rangle$ shown for
  corroboration (green). After the initialisation transient (shaded,
  $e\lesssim0.5$), $\chi$ settles at $\approx0.8$ (dashed line; interquartile band
  $0.7$--$0.9$ shaded) and remains there to $e=4$. Crosses mark single-dump
  excursions in which an eddy event fires near a dump time and momentarily renders
  $\mathrm{d}k_t/\mathrm{d}t$ positive (finite-difference sampling artefacts). The
  budget and gradient estimates bracket $\chi=O(1)$---the regime in which neither
  the rapid-distortion nor the near-equilibrium limit is uniformly valid, and the
  regime the leading-edge stagnation flow occupies.}
  \label{fig:chi}
\end{figure}

\begin{figure}
  \centering
  \includegraphics[width=0.62\textwidth]{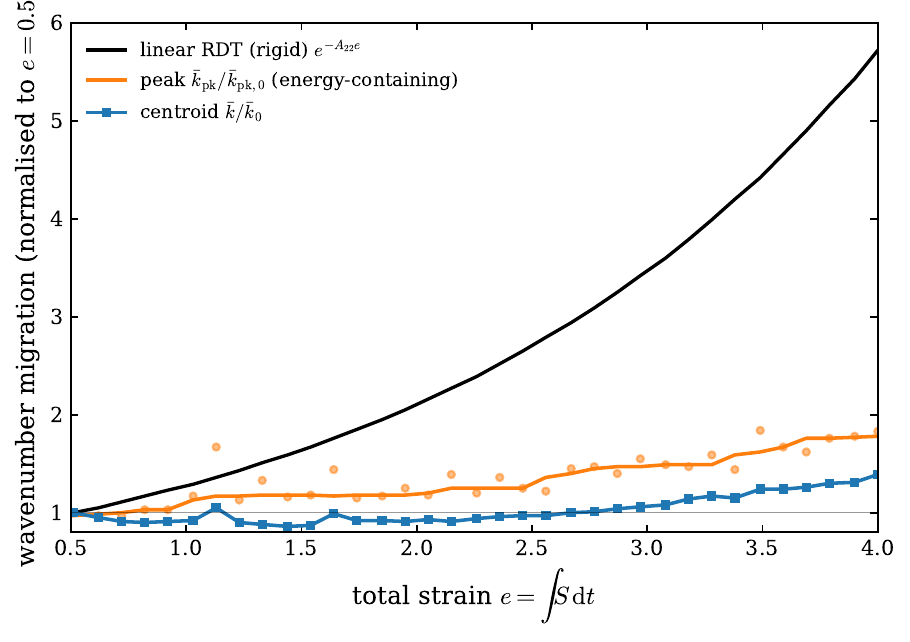}
  \caption{Migration of the spectral peak (energy-containing wavenumber, orange;
  points are per-dump values, line a rolling median) and the centroid (blue)
  along the strained trajectory at the operating point, each normalised to
  $e=0.5$, against the linear rapid-distortion rigid-translation prediction
  $e^{-A_{22}e}$ (black). Both the peak and the centroid remain far below the
  rigid line---the peak migrates only weakly (to less than $2$ against a
  predicted $5.7$ at $e=4$)---showing that at $\chi=O(1)$ the eddy relaxation
  holds back the energy-containing scales themselves, not merely the dissipation
  tail. This corroborates the intermediate-regime reading of Figure~\ref{fig:chi}
  independently of how the dissipation is estimated.}
  \label{fig:peak}
\end{figure}

\begin{figure}
  \centering
  \begin{subfigure}[t]{0.48\textwidth}\centering
    \includegraphics[width=\linewidth]{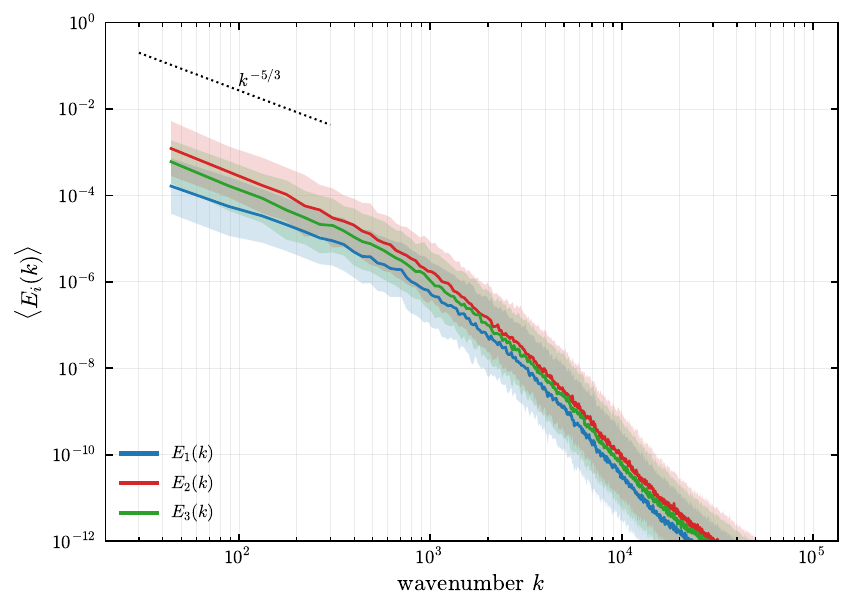}
    \caption{}\label{fig:op-spectra}
  \end{subfigure}\hfill
  \begin{subfigure}[t]{0.48\textwidth}\centering
    \includegraphics[width=\linewidth]{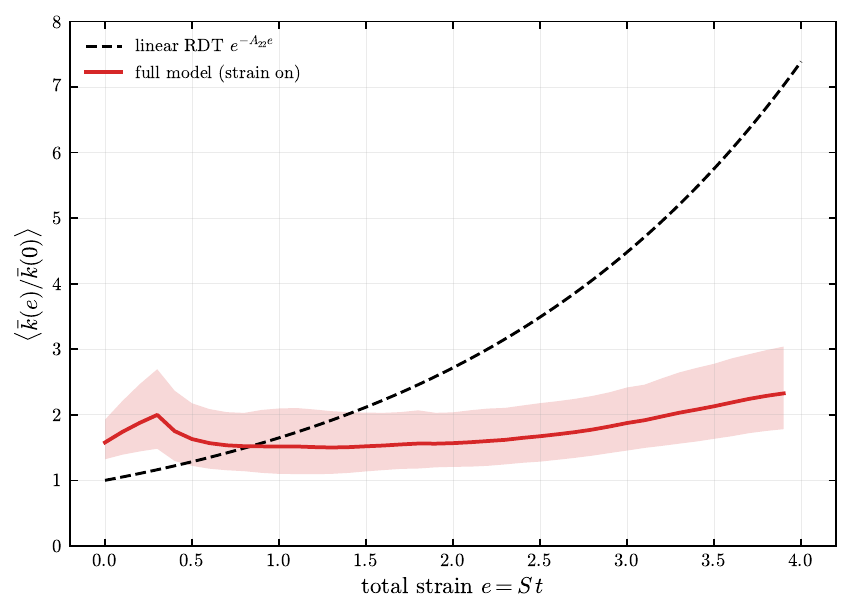}
    \caption{}\label{fig:op-centroid}
  \end{subfigure}
  \caption{Physical operating point $\nu\approx3\times10^{-6}$
  ($Re_{\mathcal{L}}\approx4.4\times10^3$), resolved on a $16000$-cell grid;
  ensemble means over $N=200$ realisations with geometric $\pm\sigma$ bands.
  (\textit{a}) Final-strain component spectra $\langle E_i(k)\rangle$: the
  resolved spectrum falls smoothly across four decades with no grid-scale shelf;
  the upwash component $E_2$ (red) is amplified above the streamwise $E_1$
  (blue). The dotted line marks a $k^{-5/3}$ slope. (\textit{b}) Total-spectrum
  centroid migration $\langle\bar k(e)/\bar k(0)\rangle$ (the centroid curve
  lightly smoothed) against the linear-RDT prediction $e^{-A_{22}e}$ (dashed).
  After the initialisation transient near $e\simeq0.3$, the centroid migrates to
  $\approx2.4$ by $e=4$, far below the rigid-translation value of $7.4$:
  the broadband distortion that linear theory cannot represent.}
  \label{fig:spec-op}
\end{figure}


\subsection{Validation against a strained-turbulence benchmark}
\label{sec:cmk}

The verification of \S\,\ref{sec:val-spectrum} establishes that the model
departs from linear rapid-distortion theory in a broadband manner, but it does
so against the model's own no-strain baseline. An external benchmark is needed
to confirm that both the rapid-distortion limit and the departure from it are
physical. We use the plane-strain experiment of \citet{ChenMeneveauKatz2006}
(hereafter CMK), in which initially near-isotropic turbulence at
$R_\lambda\approx400$ is subjected to a controlled planar straining cycle and
the one-dimensional component spectra are compared with rapid-distortion theory.
We reproduce the \emph{straining} phase of their cycle; the relaxation and
destraining phases involve a strain reversal that the present monotonic-strain
configuration does not model. In CMK's variables the end of the straining phase
corresponds to a total deformation $D=\exp(\int S\,\mathrm{d}t)\approx3$, which
maps to a total strain $e=St=2\ln D\approx2.20$ in the present notation.

\begin{figure}
  \centering

  \begin{subfigure}[t]{0.48\textwidth}
    \centering
    \includegraphics[width=\linewidth]{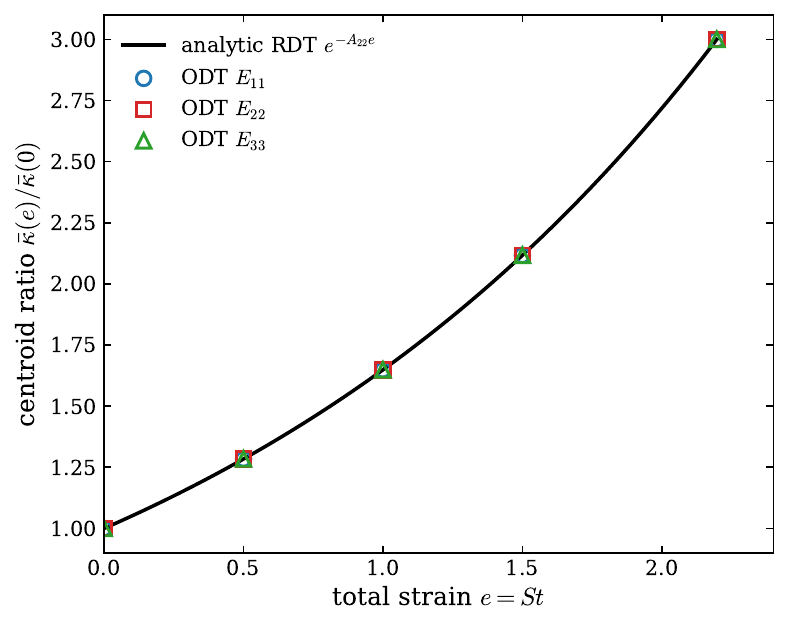}
    \caption{}
    \label{fig:cmk-rdt-centroid}
  \end{subfigure}
  \hfill
  \begin{subfigure}[t]{0.48\textwidth}
    \centering
    \includegraphics[width=\linewidth]{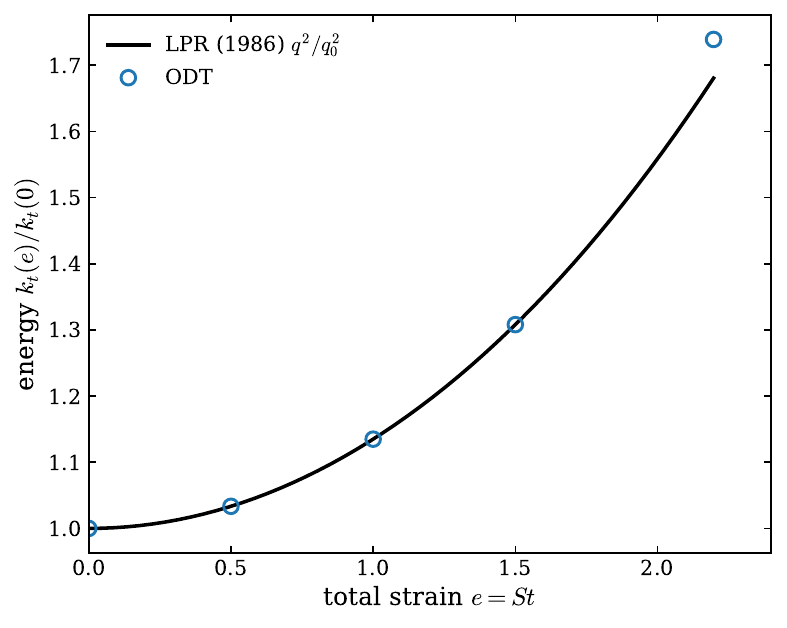}
    \caption{}
    \label{fig:cmk-rdt-energy}
  \end{subfigure}

  \caption{Rapid-distortion limit, with eddy events suppressed and inviscid
  evolution, against the analytic reference of \citet{LeePiomelliReynolds1986}
  used by \citet{ChenMeneveauKatz2006}. (\textit{a}) Component
  spectral-centroid migration follows the rigid-translation prediction
  $e^{-A_{22}e}$, shown by the solid line, reaching the deformation ratio
  $D=3$ at the end of the straining phase, with the three components
  coincident. (\textit{b}) Turbulent kinetic energy follows the plane-strain
  amplification $q^2/q_0^2$, shown by the solid line, reaching $1.74$ at
  $e=2.20$.}
  \label{fig:cmk-rdt}
\end{figure}
\subsubsection{The rapid-distortion limit}\label{sec:cmk-rdt} With the eddy events suppressed and zero molecular viscosity, the model must reproduce the analytic rapid-distortion result that CMK use as their reference,
taken from \citet{LeePiomelliReynolds1986}. Figure~\ref{fig:cmk-rdt} shows the
comparison. The spectral centroid of each velocity component
(figure~\ref{fig:cmk-rdt-centroid}) follows the rigid-translation prediction
$\bar\kappa(e)/\bar\kappa(0)=e^{-A_{22}e}$ exactly, reaching the deformation
ratio $D=3.00$ at $e=2.20$, with the three components coincident. The turbulent
kinetic energy (figure~\ref{fig:cmk-rdt-energy}) follows the
\citet{LeePiomelliReynolds1986} plane-strain amplification
$q^2/q_0^2 = 1 + \tfrac{8}{15}a^2 + \tfrac{8}{315}a^4$ (with $a=\int S\,\mathrm
{d}t$), reaching $1.74$ against the analytic $1.74$ at the end of the straining
phase. The strain operator therefore reproduces the rapid-distortion limit that
underlies CMK's reference, to within discretisation error, in both the spectral
translation and the energy amplification. We note that this agreement requires
the truly inviscid setting ($\nu=0$): on a fine adaptive grid any nonzero
viscosity dissipates the compressed small scales and degrades the rigid
translation, an effect we verified directly.

\subsubsection{The full model: emergence of spectral
anisotropy}\label{sec:cmk-full}

With the eddy events and molecular viscosity active, the model is no longer
constrained to rigid translation, and the strain drives a measurable anisotropy
between the velocity components. Figure~\ref{fig:cmk-aniso} shows the
ensemble-averaged ($N=64$) centroid migration of the streamwise ($E_{11}$) and
upwash ($E_{22}$) component spectra. The band-limited initial field is not an
equilibrium turbulent state; over the first strain increment the eddy events
rearrange it to a relaxed state, accompanied by a rapid initial energy
adjustment (shaded region). Measured from this relaxed reference state, the two
components diverge progressively under continued strain: the streamwise
component migrates to higher wavenumber while the upwash component is held back.
This is the same qualitative behaviour CMK report---the strained spectrum
departs from the isotropic-translation expectation, with the components
responding differently to the compression---and it is the spectral signature of
the broadband distortion identified in \S\,\ref{sec:val-spectrum}.

We restrict this comparison to its qualitative content for three reasons, each
stated plainly. First, the present band-limited initial spectrum differs from
CMK's model spectrum, so absolute spectral levels are not directly comparable;
the relaxation transient further precludes an absolute-level match. Second, the
one-dimensional spectra resolved on the ODT line and the transverse/longitudinal
one-dimensional spectra of the three-dimensional experiment are not the same
projection of the spectrum tensor, so the cross-component ratio differs by
construction even in the isotropic state. Third, the effective scale separation
achieved here, $\mathcal{L}/\eta\approx250$, is below the experimental value
$\approx930$; matching the latter on a single line exceeds feasible cost, as the
compressed dissipation range demands prohibitive resolution. Within these
limits, the model reproduces the rapid-distortion reference exactly
(\S\,\ref{sec:cmk-rdt}) and the qualitative departure mechanism of the full
benchmark (this section). A quantitative spectral reproduction, requiring a
matched equilibrium inflow spectrum and matched Reynolds number, is left to
future work.

\begin{figure}
  \centering
  \includegraphics[width=0.7\textwidth]{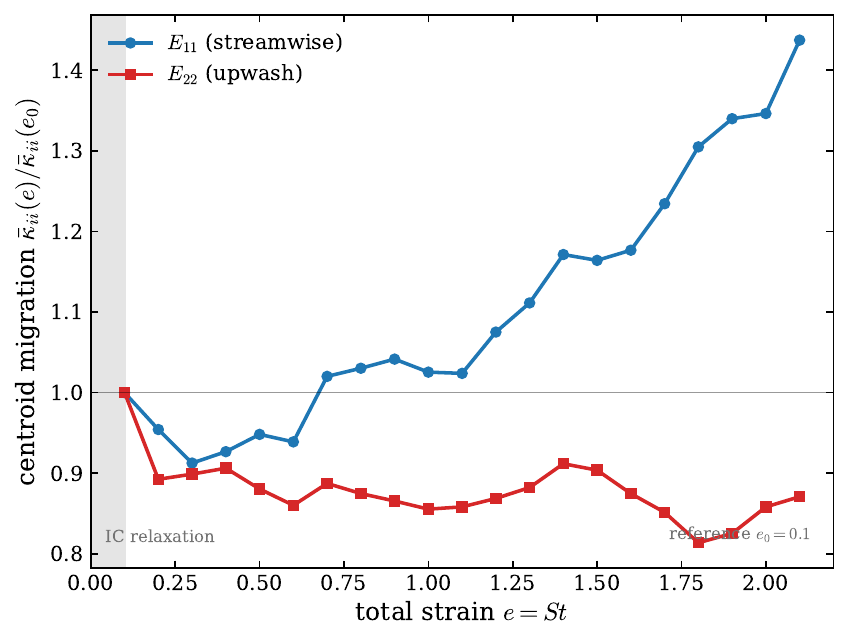}
  \caption{Full model (eddies and viscosity active; ensemble mean, $N=64$):
  centroid migration of the streamwise ($E_{11}$) and upwash ($E_{22}$)
  component spectra, each normalised to the relaxed reference state
  $e_0=0.1$. The band-limited initial field relaxes over the first strain
  increment (shaded); thereafter the components diverge under strain---the
  streamwise component migrating to higher wavenumber while the upwash component
  is held back---reproducing the qualitative departure-from-RDT mechanism of
  \citet{ChenMeneveauKatz2006}. The comparison is qualitative; see text.}
  \label{fig:cmk-aniso}
\end{figure}


\subsection{Validation against direct numerical simulation of strained
turbulence}
\label{sec:lr-validation}

The benchmark of \S\,\ref{sec:cmk} establishes that the present model reproduces
the rapid-distortion limit exactly and departs from it in the broadband manner
expected at finite Reynolds number. That comparison is, however, restricted in
two respects: it is an experiment, and at its Reynolds number
($R_\lambda\approx400$) a quantitative spectral reproduction on a single
one-dimensional line is precluded by resolution. To validate the model where it
is most directly applicable---at the level of the single-point Reynolds-stress
anisotropy, which is the quantity that ultimately feeds the leading-edge
response---we compare against the direct numerical simulations (DNS) of
\citet{LeeReynolds1985}. Their simulations are an ideal second benchmark for
three reasons. First, they are DNS, free of the measurement noise and finite-%
sample scatter of an experiment. Second, they were deliberately run at low
turbulence Reynolds number, $q^4/(\nu\varepsilon)\lesssim100$, a constraint
imposed in the original study by the requirement to resolve the small scales on
a $128^3$ mesh; this same low Reynolds number is fully resolvable on a single
ODT line, so the resolution wall encountered in \S\,\ref{sec:cmk} does not arise.
Third, they report the Reynolds-stress anisotropy as a function of total strain
for several irrotational strain modes, alongside the analytic rapid-distortion
prediction, providing a graded reference curve rather than a single endpoint.

\subsubsection{Comparison quantity and strain protocol}
\label{sec:lr-quantity}

The comparison is made in terms of the Reynolds-stress anisotropy tensor
\begin{equation}
  b_{ij} \;=\; \frac{\langle u_i u_j\rangle}{\langle u_k u_k\rangle}
              \;-\; \tfrac{1}{3}\,\delta_{ij},
  \label{eq:bij}
\end{equation}
evaluated as a function of the reference total strain
\begin{equation}
  c \;=\; \exp\!\left(\int_0^t S(t')\,\mathrm{d}t'\right),
  \qquad
  S \;=\; \sqrt{\tfrac{1}{2}\,S_{ij}S_{ij}},
  \label{eq:total-strain}
\end{equation}
where $S$ is the equivalent mean strain rate and $c$ is the cumulative
deformation, following the definitions of \citet[][Eqs.~5.1.4 and
5.1.7]{LeeReynolds1985}. Note that $c$ is a deformation \emph{ratio}: it is
related to the integrated strain in the present notation by
$\int_0^t S\,\mathrm{d}t = \ln c$. The anisotropy~\eqref{eq:bij} is computed
directly from the single-point velocity-component variances along the ODT line,
ensemble-averaged over $N=1000$ independent realizations; no spectral transform
is involved, so the comparison is insensitive to the small-scale resolution that
limited the spectral comparison of \S\,\ref{sec:cmk}.

Two of \citeauthor{LeeReynolds1985}'s irrotational strain modes are considered.
The first is \emph{plane strain},
\begin{equation}
  S_{ij} \;=\; S\,\mathrm{diag}(0,\,-1,\,+1),
  \label{eq:lr-plane-strain}
\end{equation}
in which one direction is unstrained, one is compressed, and one is stretched.
The second is \emph{axisymmetric contraction},
\begin{equation}
  S_{ij} \;=\; \frac{2}{\sqrt{3}}\,S\,
  \mathrm{diag}\!\left(+1,\,-\tfrac{1}{2},\,-\tfrac{1}{2}\right),
  \label{eq:axisym-contraction}
\end{equation}
in which one axis is stretched and the two transverse directions are compressed
equally. In both cases the ODT line is oriented along a compressed direction, so
that the mean dilatation compresses the line as $L(t)=L_0\exp(A_{22}t)$, with
$A_{22}=-S$ for plane strain and $A_{22}=-S/\sqrt{3}$ for axisymmetric
contraction. The plane-strain runs are carried to $c=4$ and the axisymmetric runs
to $c=3.32$, matching the total strains reported by \citeauthor{LeeReynolds1985}
The kinematic fidelity of the strain implementation is confirmed by the line
compression: the measured $L(t_f)/L_0$ agrees with $\exp(A_{22}t_f)$ to better
than $0.3\%$ in both cases.

A note on the strain-rate parameter is in order, because it governs how the two
modes should be interpreted. \citeauthor{LeeReynolds1985} characterise each run
by the strain-rate parameter $S^\ast=Sq^2/\varepsilon$, the ratio of the
eddy-turnover time to the mean-strain time. With $q^2=2k_t$ this is exactly twice
the strain parameter of \S\,\ref{sec:spec-op}, $S^\ast=2\chi$, provided the same
budget dissipation \eqref{eq:eps-budget} is used; our operating point at
$\chi\approx0.8$ thus corresponds to $S^\ast\approx1.7$, in the middle of
\citeauthor{LeeReynolds1985}'s range (their runs span $S^\ast=0.5$ to $616$), so
the emergent-spectrum results and the present anisotropy comparison are placed
against the DNS on a single, consistent ruler. For \emph{axisymmetric contraction}
they find that the anisotropy evolution depends only on the total strain $c$ and
is essentially independent of $S^\ast$; this makes the axisymmetric case a clean,
parameter-free target, since matching $c$---a purely kinematic quantity that the
dilatation reproduces exactly---is sufficient. For \emph{plane strain} the
situation is more nuanced: the compressed ($b_{22}$) and stretched ($b_{33}$)
components are nearly $S^\ast$-independent, but the unstrained component $b_{11}$
is strongly $S^\ast$-dependent, ranging in the DNS from $b_{11}\approx0$ at the
lowest strain rates (run~PXH) to the rapid-distortion value $b_{11}\approx+0.12$
at the highest (run~PXG). This distinction is central to the interpretation of
the plane-strain comparison below.

\subsubsection{Plane strain}
\label{sec:lr-plane}
\begin{figure}
  \centering
  \includegraphics[width=0.8\textwidth]{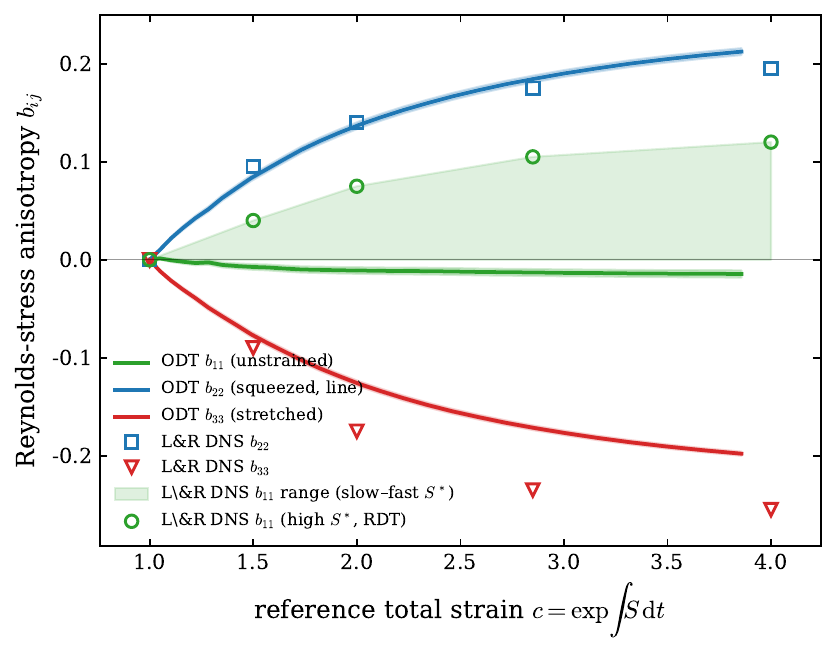}
  \caption{Reynolds-stress anisotropy under plane strain: present model
  (lines; ensemble mean over $N=1000$ realizations, standard-error band shown)
  versus the DNS of \citet{LeeReynolds1985} (symbols, digitised from their
  Figure~5.4). The compressed component $b_{22}$ and stretched component $b_{33}$
  are reproduced quantitatively. The unstrained component $b_{11}$ is strongly
  strain-rate-dependent in the DNS; its range from slow to fast strain rate is
  shown as the shaded band, with the model's near-zero value lying at the
  slow-strain edge. Reference total strain $c=\exp\int S\,\mathrm{d}t$.}
  \label{fig:lr-plane}
\end{figure}
Figure~\ref{fig:lr-plane} shows the evolution of the three anisotropy components
under plane strain, with the present model (lines, ensemble mean over $N=1000$
realizations with the standard-error band shown) overlaid on the DNS of
\citet{LeeReynolds1985} (symbols, digitised from their Figure~5.4).

The agreement on the two strained components is good. The compressed-direction
anisotropy $b_{22}$ rises monotonically and the model tracks the DNS closely
across the entire strain range, reaching $b_{22}=+0.21$ at $c=3.86$ against the
DNS value of $\approx+0.20$. The stretched-direction anisotropy $b_{33}$ falls to
$b_{33}=-0.20$ in the model against $\approx-0.26$ in the DNS; the model captures
the correct trend and the bulk of the magnitude, under-predicting the depletion of
the stretched component by about $0.05$ at the largest strain. These two
components are precisely the ones \citeauthor{LeeReynolds1985} identify as
nearly strain-rate-independent, so they constitute the robust part of the
comparison, and the model reproduces them quantitatively.

The unstrained-direction anisotropy $b_{11}$ requires more careful
interpretation, and we present it honestly rather than as a clean match. In the
DNS, $b_{11}$ is the most strain-rate-sensitive component, spanning the shaded
band in Figure~\ref{fig:lr-plane} from $b_{11}\approx0$ (slow strain) to
$b_{11}\approx+0.12$ (fast strain, approaching the rapid-distortion limit). The
present model yields $b_{11}\approx-0.01$, essentially zero, whereas the DNS
rises to $b_{11}\approx+0.12$ at the highest strain rate. As shown in
\S\,\ref{sec:lr-discussion}, this is not a deficiency of the present model: the
positive $b_{11}$ seen in the DNS lies outside the reach of rapid-distortion
theory and of the standard second-moment closures alike, and the present result
coincides with the closure prediction on which the model is built. Physically,
$b_{11}$ is governed entirely by intercomponent redistribution, since the
unstrained direction receives no energy directly from the mean strain; we
therefore defer its interpretation to \S\,\ref{sec:lr-discussion}, where it is
compared against the full hierarchy of analytic predictions.

A consistency check supports the internal correctness of the computed
anisotropy: the model's $b_{ij}$ is traceless to machine precision at every
strain, $b_{11}+b_{22}+b_{33}=0$, as required by the definition~\eqref{eq:bij}.

\subsubsection{Axisymmetric contraction}
\label{sec:lr-axisym}
\begin{figure}
  \centering
  \includegraphics[width=0.8\textwidth]{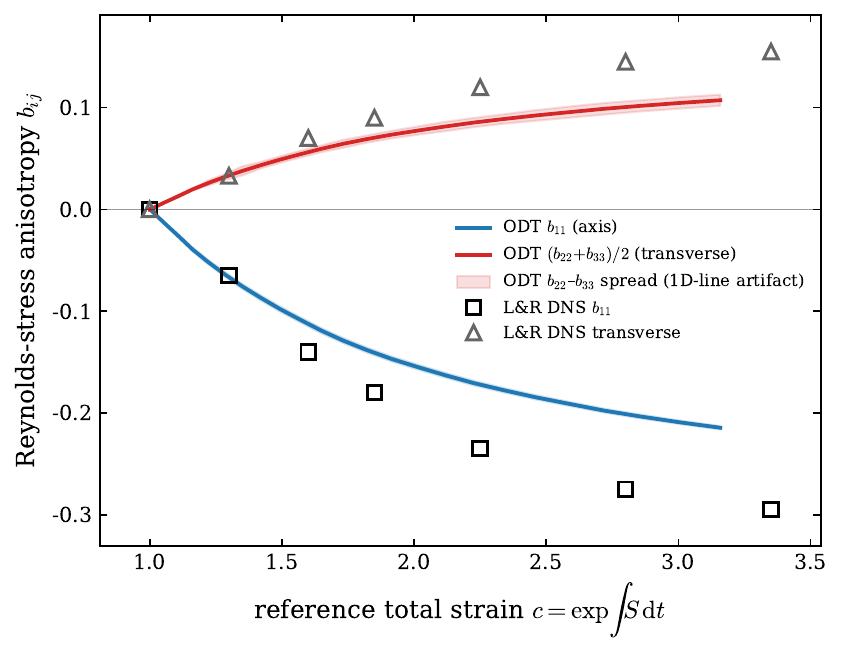}
  \caption{Reynolds-stress anisotropy under axisymmetric contraction: present
  model (lines; $N=1000$) versus the DNS of \citet{LeeReynolds1985} (symbols,
  digitised from their Figure~5.8). The axis component $b_{11}$ and the transverse
  mean $\tfrac{1}{2}(b_{22}+b_{33})$ are compared; the shaded spread is the
  residual $b_{22}\neq b_{33}$ split arising because the one-dimensional line
  breaks the transverse symmetry of the flow. The model reproduces the trend and
  sign structure while under-predicting the anisotropy magnitude by about
  $25\%$.}
  \label{fig:lr-axisym}
\end{figure}
Figure~\ref{fig:lr-axisym} shows the axisymmetric-contraction comparison. Here
the physically distinct quantities are the axis component $b_{11}$ (the stretched
direction) and the transverse anisotropy of the two compressed directions, which
in the exact flow are equal, $b_{22}=b_{33}$, by the axisymmetry of the imposed
strain about $x_1$.

A structural feature of the one-dimensional representation must be stated at the
outset. Because the ODT line lies along one of the two transverse (compressed)
directions, the model resolves the dynamics along that direction explicitly
through the eddy events, whereas the orthogonal transverse direction is
represented only through the pointwise three-component velocity. The line
therefore breaks the $x_2\leftrightarrow x_3$ symmetry that the physical flow
possesses, and the model does not return exactly $b_{22}=b_{33}$. At $N=1000$
realizations the residual split is small---$b_{22}=+0.102$ versus
$b_{33}=+0.113$ at $c=3.16$, a difference of about $0.011$---and is shown as the
shaded spread in Figure~\ref{fig:lr-axisym}; the larger asymmetry seen in a
single realization is sampling noise that the ensemble removes. The
symmetry-respecting transverse quantity is the mean
$\tfrac{1}{2}(b_{22}+b_{33})$, which by tracelessness equals $-\tfrac{1}{2}b_{11}$
and is the appropriate object to compare against the DNS transverse curve.

With this understood, the model reproduces the correct qualitative structure: the
stretched axis loses anisotropy ($b_{11}<0$) while the compressed transverse
plane gains ($b_{22},b_{33}>0$), monotonically with total strain. Quantitatively,
the model under-predicts the magnitude of the anisotropy: at $c=3.16$ it reaches
$b_{11}=-0.21$ against the DNS value of $\approx-0.29$ (digitised from their
Figure~5.8), an under-prediction of roughly $25\%$, with a correspondingly
smaller transverse anisotropy. The trend, the sign structure, and the monotonic
approach to a strained asymptotic state are all captured; the magnitude is
systematically low. This deficit has three distinguishable contributions. The
largest is the spectral-closure error already isolated in
\S\,\ref{sec:finite-strain}, where, in the absence of eddy events, the moment
closure alone was shown to under-predict the exact rapid-distortion anisotropy at
finite strain (for axisymmetric strain the lateral fraction at $e=4$ was $0.468$
for LRR against $0.498$ exact); this closure gap is intrinsic to any single-point
model and accounts for the bulk of the $25\%$. A second, smaller contribution is
the $x_2\leftrightarrow x_3$ symmetry-breaking artifact of the one-dimensional
line documented above, which redistributes a small part of the transverse
anisotropy asymmetrically. The remainder is attributable to the calibration of
the eddy events, whose slow return-to-isotropy competes with the strained
amplification; this last part is the only one specific to the present
implementation, and it is the smallest. We do not attempt a precise numerical
partition, which would require isolating each effect in a separate run, but note
that the dominant term is the closure error common to the whole second-moment
family rather than anything peculiar to the strain-coupled ODT.

\subsubsection{Discussion: the unstrained component and the limits of the closure
family}
\label{sec:lr-discussion}
\begin{figure}
  \centering
  \includegraphics[width=0.82\textwidth]{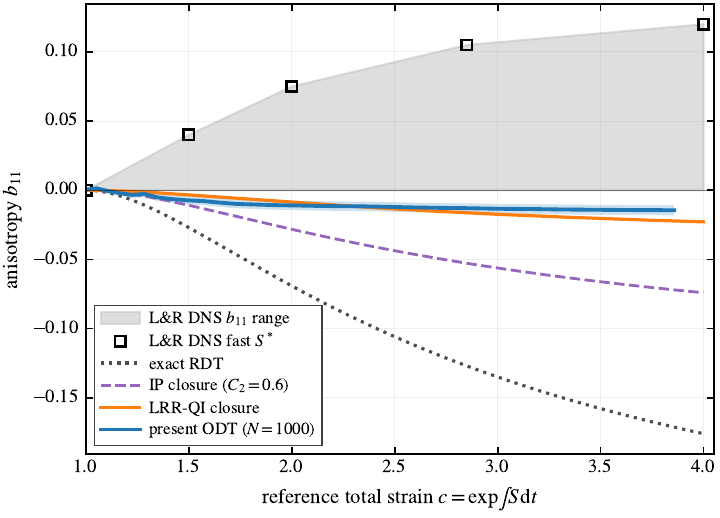}
  \caption{The unstrained-direction anisotropy $b_{11}$ under plane strain: the
  present ODT (solid, $N=1000$, with standard-error band) compared with exact
  linear rapid-distortion theory (dotted), the isotropization-of-production
  closure (dashed), the quasi-isotropic Launder--Reece--Rodi closure (the form
  coded in the model), and the DNS of \citet{LeeReynolds1985} (symbols and shaded
  $S^\ast$ range). The present model coincides with the LRR-QI closure prediction;
  exact rapid distortion gives $b_{11}<0$, since the unstrained component is
  conserved while the energy grows. The DNS rise to $b_{11}>0$ lies above the
  entire closure family and reflects finite-Reynolds-number vorticity structure
  outside the moment-closure framework.}
  \label{fig:lr-b11}
\end{figure}
Taken together, the two strain modes establish that the model reproduces the
Reynolds-stress anisotropy response of strained homogeneous turbulence at the
moment level---quantitatively for the production- and rapid-distortion-governed
components $b_{22}$ and $b_{33}$, and qualitatively in all cases---against DNS at
a Reynolds number that, unlike the spectral benchmark of \S\,\ref{sec:cmk}, is
fully resolved on the single line.

The one component the model does not reproduce, the plane-strain unstrained
anisotropy $b_{11}$, deserves a careful and honest interpretation, because it is
\emph{not} a deficiency specific to the present model: it lies outside what the
rapid-distortion limit and the entire second-moment-closure family can produce.
This is established directly in Figure~\ref{fig:lr-b11}, which compares the
present ODT against the analytic alternatives on the same axes. Exact linear
rapid-distortion theory gives $b_{11}<0$ in plane strain, and the reason is
kinematic and exact: the unstrained direction has $A_{1k}=0$, so under rapid
distortion $\langle u_1^2\rangle$ is conserved while the total energy $q^2$ grows,
forcing $b_{11}=\langle u_1^2\rangle/q^2-\tfrac{1}{3}$ to \emph{decrease}.
Integrating the standard second-moment closures over the same strain---both the
quasi-isotropic Launder--Reece--Rodi form coded in the model and the
isotropization-of-production form---yields $b_{11}\approx0$, slightly negative,
for the full trajectory. The present ODT result coincides with the LRR-QI closure
curve to within the ensemble scatter. In other words, the model returns exactly
the prediction of the second-moment closure on which its strain operator is built;
its $b_{11}$ is correct \emph{for its formulation}.

The DNS, by contrast, shows $b_{11}$ rising to $+0.12$ at high strain rate. The
apparent paradox---a rapid effect that exceeds the rapid limit---resolves once
the two are stated precisely. Exact \emph{linear} rapid-distortion theory gives
$b_{11}<0$ kinematically, as established above: with $A_{1k}=0$ the unstrained
energy $\langle u_1^2\rangle$ is conserved while $q^2$ grows. The positive
$b_{11}$ of the DNS is therefore not a feature of the rapid limit at all but a
finite-Reynolds-number correction \emph{to the linearisation itself}---a
nonlinear, vorticity-mediated effect that the linear theory omits by
construction and that no second-moment closure carries. That it is largest at
\emph{fast} strain rules out a slow return-to-isotropy origin, which would
require time to act and would be largest at slow strain, and is consistent with a
correction to the rapid dynamics rather than to the slow. \citet{LeeReynolds1985}
attribute the velocity-field anisotropy in these flows to the structure of the
turbulent vorticity field, precisely such a finite-Reynolds-number property. The
departure of the DNS $b_{11}$ from the entire closure family in
Figure~\ref{fig:lr-b11} is thus a known limitation of moment-level and
linear-rapid modelling, shared by rapid-distortion theory and by every closure
shown, rather than an artifact of the present approach.

The practical consequence for the present work is favourable. The components that
govern the upwash spectrum fed to the leading-edge response---the energy
amplification and the anisotropy in the strained directions---are precisely those
the model reproduces quantitatively. The single component it does not capture is
one that no rapid-distortion or second-moment closure captures, and whose
finite-Reynolds-number origin is outside the modelling framework adopted here. The
validation therefore places the model on a firm footing: it performs at the level
of the best available second-moment closure on this benchmark, with the residual
discrepancy attributable to physics that the closure family as a whole omits.

Finally, we note that \citeauthor{LeeReynolds1985}'s own transport-budget analysis
(their Figures~5.5 and~5.6) provides independent support for a central premise of
\S\,\ref{sec:gate}: in plane strain they find that the slow pressure--strain-rate
term is negligible and that the intercomponent transfer is carried almost entirely
by the rapid term. This is the same decomposition on which the closure of
\S\,\ref{sec:gate} rests, and the DNS confirmation that the rapid term dominates in
this flow lends physical weight to the closure beyond the anisotropy comparison
itself.

\section{Closure of the rapid pressure--strain term on a single line}
\label{sec:gate}

The results of \S\,\ref{sec:val-spectrum} establish the first half of the
present contribution: a gust convecting toward the leading edge is distorted
into a broadband spectrum that departs measurably---and, at the Reynolds number
of the experiment, substantially---from the rigid translation that linear
rapid-distortion theory predicts. That distortion is what the strain-coupled
ODT model is built to capture, and it is the quantity an aeroacoustic
prediction ultimately requires. Its value, however, rests on a premise that the
results alone do not justify. The rapid pressure--strain redistribution that
drives the distortion was evaluated there through the algebraic moment closure of \S\,\ref{sec:rapid-kernel}
\citep{LaunderReeceRodi1975}, applied component by component on the resolved
line. That closure is a modelling input. The redistribution it represents is,
in its exact form, a \emph{nonlocal, three-dimensional} object: the rapid
pressure at a point is a volume integral over the entire fluctuating field, and
the resulting pressure--strain correlation is an integral over the full
three-dimensional velocity spectrum. Whether such an object admits any
consistent representation on the single one-dimensional line that ODT
evolves---and if so, under what condition, and at what cost---is not settled by
reproducing the moment trajectory of the closure, since the closure is by
construction able to supply that trajectory. It is a separate question, and it
is the second half of the contribution. This section answers it.

We proceed at the spectral level, where the question is sharpest. We first
write the rapid pressure--strain term exactly, with no modelling, and exhibit
the three-dimensional nonlocal integral that obstructs a single-line
representation (\S\,\ref{sec:exact-pi}--\S\,\ref{sec:obstruction}). We then
identify the one assumption under which that integral collapses to a closed
functional of the statistics the line carries, state it explicitly, justify it
for the present inflow, and bound the error it incurs
(\S\,\ref{sec:assumption}--\S\,\ref{sec:residual}). The collapsed form is a
closed, computable kernel, derived in full and verified against two exact
constraints. We show that this spectral kernel reduces, in the isotropic limit,
to precisely the moment closure used in \S\,\ref{sec:val-spectrum}
(\S\,\ref{sec:lrr-reduction})---so that the closure employed there is not an
independent assumption but the leading-order reduction of the line-consistent
term derived here. Finally, we establish that the resulting rapid source is
disjoint from the redistribution already performed by the eddy events, which is
the remaining requirement for the model to be free of double counting
(\S\,\ref{sec:doublecount}). The outcome is conditional rather than universal:
the rapid term \emph{does} close on a single line, within a stated and
physically grounded limit, and the limit is exactly the regime the
leading-edge problem occupies.

We state at the outset the status of this section relative to the results that
precede it. The distortion computed in \S\,\ref{sec:val-spectrum} used the
algebraic moment closure of \S\,\ref{sec:rapid-kernel}, not the spectral kernel
derived below; none of the simulation results employ equation~\eqref{eq:Piclosed}.
The purpose of this section is not to replace the closure used there but to
justify it: to establish that the moment closure is the leading-order reduction
of a term that genuinely closes on a single line under a stated condition, rather
than an assumption imposed without warrant. The spectral kernel and its $O(b^2)$
corrections are derived and verified against exact constraints; exercising them
in place of the algebraic closure, and characterising the breakdown of the
closure at large accumulated strain, are left to future work.

\subsection{The exact rapid pressure--strain term}\label{sec:exact-pi}

For incompressible flow the fluctuating pressure satisfies a Poisson equation
whose source splits into a part linear in the fluctuations and the mean
velocity gradient (the \emph{rapid} part) and a part quadratic in the
fluctuations (the \emph{slow} part) \citep{Pope2000}:
\begin{equation}
  \frac{1}{\rho}\,\nabla^2 p'
  = \underbrace{-2\,\frac{\partial U_k}{\partial x_l}\,
      \frac{\partial u'_l}{\partial x_k}}_{\text{rapid}}
    \;\underbrace{-\,\frac{\partial^2}{\partial x_k\partial x_l}
      \bigl(u'_k u'_l - \overline{u'_k u'_l}\bigr)}_{\text{slow}} .
  \label{eq:poisson-rapid}
\end{equation}
The rapid part is linear in $\boldsymbol{u}'$ and proportional to the mean
gradient $\partial U_k/\partial x_l$, which for the present homogeneous plane
strain is the constant tensor $A_{kl}$ of \S\,\ref{sec:val-spectrum}. Writing
$p'_{\mathrm r}$ for the rapid pressure and inverting the Laplacian with the
free-space Green's function $G(\boldsymbol{x})=-1/4\pi|\boldsymbol{x}|$,
\begin{equation}
  p'_{\mathrm r}(\boldsymbol{x})
  = \frac{\rho}{2\pi}\,A_{kl}\!\int
    \frac{1}{|\boldsymbol{x}-\boldsymbol{x}'|}\,
    \frac{\partial u'_l}{\partial x'_k}(\boldsymbol{x}')\,
    \mathrm{d}\boldsymbol{x}' .
  \label{eq:prapid}
\end{equation}
The rapid pressure--strain correlation that redistributes energy among the
Reynolds-stress components is
$\Pi^{(\mathrm r)}_{ij}
   = \overline{(p'_{\mathrm r}/\rho)\,(\partial u'_i/\partial x_j
     + \partial u'_j/\partial x_i)}$.
Substituting \eqref{eq:prapid} and using homogeneity to pass to the
velocity spectrum tensor
$\Phi_{mn}(\boldsymbol{\kappa})$
(the Fourier transform of $\overline{u'_m(\boldsymbol{x})\,
u'_n(\boldsymbol{x}+\boldsymbol{r})}$),
one obtains the standard exact result
\citep{BatchelorProudman1954,Crow1968,HuntCarruthers1990}
\begin{equation}
  \Pi^{(\mathrm r)}_{ij}
  = A_{kl}\,M_{ijkl},
  \qquad
  M_{ijkl}
  = \int_{\mathbb{R}^3}
    \Bigl[\,
      \frac{\kappa_j\kappa_k}{\kappa^2}\,\Phi_{il}(\boldsymbol{\kappa})
      + \frac{\kappa_i\kappa_k}{\kappa^2}\,\Phi_{jl}(\boldsymbol{\kappa})
    \,\Bigr]\,\mathrm{d}\boldsymbol{\kappa},
  \label{eq:Mijkl}
\end{equation}
where $\kappa=|\boldsymbol{\kappa}|$. Equations
\eqref{eq:poisson}--\eqref{eq:Mijkl} are exact; no modelling has entered. The
tensor $M_{ijkl}$ is a linear functional of the \emph{full three-dimensional}
spectrum $\Phi_{mn}(\boldsymbol{\kappa})$ through the nonlocal projection
operator $\kappa_a\kappa_b/\kappa^2$, which is the wavenumber-space image of
the inverse Laplacian in \eqref{eq:prapid}. This nonlocality and
three-dimensionality are the obstruction that the single-line model must
confront.

\subsection{The obstruction: what an ODT line resolves}\label{sec:obstruction}

ODT evolves a velocity vector $\boldsymbol{u}'(y)$ on a single line aligned
with the resolved direction, here taken along the contraction axis of the
strain, $y\equiv x_2$. From it the model has access to the one-dimensional
spectra
\begin{equation}
  \phi_{mn}(\kappa_2)
  = \int_{\mathbb{R}^2}
      \Phi_{mn}(\boldsymbol{\kappa})\,
      \mathrm{d}\kappa_1\,\mathrm{d}\kappa_3,
  \label{eq:phi1d}
\end{equation}
the one-dimensional spectra obtained by integrating $\Phi_{mn}$ over the two
unresolved wavenumber components. The integrand of \eqref{eq:Mijkl}, by
contrast, carries the \emph{direction-resolved} weight
$\kappa_a\kappa_b/\kappa^2$, which cannot be reconstructed from
$\phi_{mn}(\kappa_2)$ alone: integrating out $\kappa_1,\kappa_3$ as in
\eqref{eq:phi1d} destroys exactly the angular information that the projection
operator acts on. The reduction $M_{ijkl}\!\to\!$ (functional of
$\phi_{mn}$) is therefore not available without an assumption about how the
energy is distributed over the unresolved wavevector directions. This is the
crux of the present question, and we make the required assumption explicit
rather than implicit.

\subsection{The closure assumption and the collapsed form}
\label{sec:assumption}

\paragraph{Assumption A1 (axisymmetry of the undistorted spectrum about the
line).} The spectrum tensor of the incoming turbulence, before the rapid
distortion acts, is axisymmetric about the resolved direction $\boldsymbol{e}_2$:
$\Phi_{mn}(\boldsymbol{\kappa})$ depends on $\boldsymbol{\kappa}$ only through
$\kappa_2$ and $\kappa_\perp=(\kappa_1^2+\kappa_3^2)^{1/2}$, and its
polarization is invariant under rotation about $\boldsymbol{e}_2$.

This is a statement about the \emph{turbulence}, not about the strain. It is
justified for the present configuration because the inflow is grid turbulence,
which is closely isotropic at the grid and is subsequently contracted along a
single axis as it approaches the stagnation plane; isotropy is the special
case of A1, and uniaxial contraction preserves axisymmetry about the
contraction axis. The plane strain $A_{kl}=\mathrm{diag}(\tfrac12,-\tfrac12,0)$
acting at the leading edge is itself \emph{not} axisymmetric, so A1 is imposed
on the state on which the rapid operator acts, consistent with the
rapid-distortion ordering in which $\Pi^{(\mathrm r)}$ is evaluated on the
current (here, axisymmetric to leading order) field. The error incurred when
the accumulated distortion has driven the spectrum away from axisymmetry is
quantified in \S\,\ref{sec:residual}.

Under A1 the angular integral in \eqref{eq:Mijkl} can be carried out. Writing
$\boldsymbol{\kappa}=(\kappa_\perp\cos\theta,\ \kappa_2,\
\kappa_\perp\sin\theta)$ and using that, by A1, $\Phi_{mn}$ is independent of
$\theta$, the azimuthal averages of the projection weights reduce to
\begin{equation}
  \Bigl\langle \frac{\kappa_1^2}{\kappa^2}\Bigr\rangle_\theta
  = \Bigl\langle \frac{\kappa_3^2}{\kappa^2}\Bigr\rangle_\theta
  = \frac{1}{2}\,\frac{\kappa_\perp^2}{\kappa^2},
  \qquad
  \Bigl\langle \frac{\kappa_2^2}{\kappa^2}\Bigr\rangle_\theta
  = \frac{\kappa_2^2}{\kappa^2},
  \qquad
  \Bigl\langle \frac{\kappa_1\kappa_3}{\kappa^2}\Bigr\rangle_\theta = 0,
  \label{eq:angular}
\end{equation}
with $\kappa^2=\kappa_2^2+\kappa_\perp^2$, and all weights mixing a resolved
and an unresolved index (e.g.\ $\kappa_1\kappa_2/\kappa^2$) averaging to zero.
Substituting \eqref{eq:angular} into \eqref{eq:Mijkl} removes the azimuthal
dependence and leaves a two-dimensional integral over $(\kappa_2,\kappa_\perp)$
of the axisymmetric spectrum. Defining the diagonal one-dimensional spectra
along the line,
\begin{equation}
  E_n^{\,\|}(\kappa_2)
   = \int_0^\infty \Phi_{nn}(\kappa_2,\kappa_\perp)\,
     2\pi\kappa_\perp\,\mathrm{d}\kappa_\perp
   \quad(\text{no sum}),
   \label{eq:Epar}
\end{equation}
the rapid pressure--strain term reduces to a two-dimensional integral over the
half-plane $(\kappa_2,\kappa_\perp)$ of the axisymmetric spectrum. The
axisymmetric, solenoidal spectrum is fixed by two scalar functions of
$(\kappa_2,\kappa_\perp)$ \citep{Batchelor1946,ChandrasekharAxisym1950}; we
denote them $a(\kappa_2,\kappa_\perp)$ and $c(\kappa_2,\kappa_\perp)$, where
$a$ is the isotropic-like scalar (the sole survivor when the turbulence is
isotropic) and $c$ measures the axisymmetric departure from isotropy. Carrying
out the azimuthal average \eqref{eq:angular} and specialising to the plane
strain $A=\mathrm{diag}(\tfrac12,-\tfrac12,0)$, the diagonal rapid
pressure--strain components are
\begin{equation}
  \Pi^{(\mathrm r)}_{nn}
  = \int_{0}^{\infty}\!\!\int_{0}^{\infty}
      g_{n}(x)\,\bigl[\,a(\kappa_2,\kappa_\perp),\,c(\kappa_2,\kappa_\perp)\,\bigr]\,
      2\pi\kappa_\perp\,\mathrm{d}\kappa_\perp\,\mathrm{d}\kappa_2
   \quad(\text{no sum on }n),
  \label{eq:Piclosed}
\end{equation}
with $x\equiv\kappa_2^2/\kappa^2$, $\kappa^2=\kappa_2^2+\kappa_\perp^2$, and the
kernels, linear in the two spectral scalars,
\begin{align}
  g_{1}(x) &= a\!\left(\tfrac18 + \tfrac34 x - \tfrac78 x^2\right)
            + c\,\tfrac78\,x\,(1-x)^2, \label{eq:g1}\\[2pt]
  g_{2}(x) &= -\tfrac32\,x\,\bigl[\,a\,(1-x) + c\,(1-x)^2\,\bigr],
            \label{eq:g2}\\[2pt]
  g_{3}(x) &= a\!\left(-\tfrac18 + \tfrac34 x - \tfrac58 x^2\right)
            + c\,\tfrac58\,x\,(1-x)^2. \label{eq:g3}
\end{align}
Equations \eqref{eq:Piclosed}--\eqref{eq:g3} are the central result:
\emph{under A1, the rapid pressure--strain term is a closed, computable
functional of the axisymmetric spectral scalars $a,c$ on the resolved
half-plane and the mean strain}, with no reference to the unresolved azimuthal
structure beyond what A1 fixes. The kernels satisfy two exact constraints that
verify the reduction. First, energy conservation: pressure--strain
redistributes without changing the trace, and indeed
$g_1(x)+g_2(x)+g_3(x)=0$ identically for all $x$ and for both scalars, so
$\Pi^{(\mathrm r)}_{11}+\Pi^{(\mathrm r)}_{22}+\Pi^{(\mathrm r)}_{33}=0$.
Second, the isotropic limit: setting $c=0$ with $a$ a function of $\kappa$
alone, the integral of $g_3$ over the half-plane vanishes, so
$\Pi^{(\mathrm r)}_{33}=0$---as it must, since the plane strain has no
component along $x_3$ and the isotropic rapid term is proportional to $S_{ij}$
\citep{Crow1968}. Both constraints hold identically in the derived kernels,
confirming the tensor reduction.

\subsection{Residual error and the limit of validity}\label{sec:residual}

Assumption A1 is exact for the undistorted inflow and is progressively
violated as the rapid strain itself renders the spectrum anisotropic about the
line. The leading correction is governed by the departure-from-axisymmetry
tensor $D_{mn}(\boldsymbol{\kappa})
=\Phi_{mn}-\Phi^{\mathrm{axi}}_{mn}$, whose contribution to
\eqref{eq:Mijkl} is the azimuthal first harmonic that \eqref{eq:angular}
discards. Because that harmonic integrates to zero against the isotropic part
and enters $M_{ijkl}$ only at second order in the accumulated anisotropy, the
closed form \eqref{eq:Piclosed} is accurate to $O(b^2)$ in the Reynolds-stress
anisotropy $b_{ij}$ generated by the strain, and exact at $b_{ij}=0$. The
practical consequence is that \eqref{eq:Piclosed} is most reliable early in
the distortion, when the gust first enters the strained region, and degrades
as the total strain $e$ accumulates---precisely the regime ordering that the
convective mapping of \S\,\ref{sec:val-spectrum} makes physical, since the
acoustically relevant upwash is set near the leading-edge plane where the
accumulated strain is bounded. We therefore regard \eqref{eq:Piclosed} as a
controlled leading-order closure, valid where the gust anisotropy is moderate,
rather than as an identity.

\subsection{Reduction to the moment closure}\label{sec:lrr-reduction}

The closed form \eqref{eq:Piclosed}--\eqref{eq:g3} operates on the resolved
spectrum, whereas the distortion results of \S\,\ref{sec:val-spectrum} were
obtained with the algebraic moment closure of \S\,\ref{sec:rapid-kernel}
\citep{LaunderReeceRodi1975}. For the two to be consistent, the spectral
kernel must reduce to that moment closure in the limit where the moment
closure is exact---isotropic turbulence. We show that it does, which both
justifies the closure used in the simulations and fixes the prefactor in an
explicit convention.

Figure~\ref{fig:pi-kernels} plots the kernels $g_1,g_2,g_3$ of
\eqref{eq:g1}--\eqref{eq:g3} against the angular variable
$x=\kappa_2^2/\kappa^2$. Two features are exact and hold for both spectral
scalars. The kernels sum to zero at every $x$,
\begin{equation}
  g_1(x)+g_2(x)+g_3(x)=0 \qquad \forall x,
  \label{eq:gtrace}
\end{equation}
so the rapid term conserves turbulent kinetic energy pointwise in wavenumber,
not merely in the integral; and each kernel vanishes at $x=1$, where the
wavevector aligns with the line and the projection $\kappa_a\kappa_b/\kappa^2$
degenerates. The signs are physically ordered: $g_1>0$ (energy fed to the
streamwise component $E_1$), $g_2<0$ (energy removed from the line/upwash
component $E_2$) over the bulk of the angular range, with $g_3$ changing sign,
consistent with the plane strain stretching $x_1$ and compressing $x_2$.

For \emph{isotropic} turbulence the anisotropy scalar vanishes ($c=0$) and the
isotropic-like scalar $a$ depends only on $\kappa$. Averaging the kernels over
solid angle---equivalently, integrating $g_n(x)$ against the isotropic
distribution of $x$ on the sphere---gives
\begin{equation}
  \langle g_1\rangle : \langle g_2\rangle : \langle g_3\rangle
  = +\tfrac15 : -\tfrac15 : 0,
  \label{eq:giso}
\end{equation}
in the ratio $+1:-1:0$. This is exactly the structure of the plane strain
$S_{ij}=\mathrm{diag}(\tfrac12,-\tfrac12,0)$: the rapid term is proportional to
$S_{ij}$, with $\Pi^{(\mathrm r)}_{33}=0$ because the strain has no component
along $x_3$. The spectral kernel therefore collapses, under isotropy, to the
isotropic-production form of the moment closure,
\begin{equation}
  \Pi^{(\mathrm r)}_{ij}
  = \tfrac{4}{5}\,k_t\,S_{ij}
  \qquad(\text{isotropic limit}),
  \label{eq:crow-spectral}
\end{equation}
the result of \citet{Crow1968} that underlies the isotropisation-of-production
(IP) rapid term in the closure of \citet{LaunderReeceRodi1975}. We note a
convention: with the symmetrised definition of the pressure--strain correlation
adopted here and the factor of two carried in the Poisson source
\eqref{eq:poisson}, the kernels \eqref{eq:g1}--\eqref{eq:g3} yield the
coefficient $\tfrac25 k_t$ for the $11$-component directly (since $S_{11}=\tfrac12$),
equivalent to the standard $\tfrac45 k_t S_{ij}$ of \eqref{eq:crow-spectral}; the ratio
\eqref{eq:giso} and the tracelessness \eqref{eq:gtrace} are independent of this
convention.

The consequence is that the algebraic closure used to obtain the distortion of
\S\,\ref{sec:val-spectrum} is not an independent assumption but the
isotropic-limit reduction of the line-consistent spectral term derived here.
The spectral form \eqref{eq:Piclosed} carries, in addition, the angular
($x$-dependent) and anisotropy ($c$-scalar) information that the moment closure
discards; the corrections it retains are precisely those of order $b^2$ in the
strain-generated anisotropy identified in \S\,\ref{sec:residual}. The moment
closure and the spectral closure thus agree to leading order and depart in a
controlled way as the gust anisotropy grows, which is the regime ordering the
present application requires.

\begin{figure}
  \centering
  \includegraphics[width=0.72\textwidth]{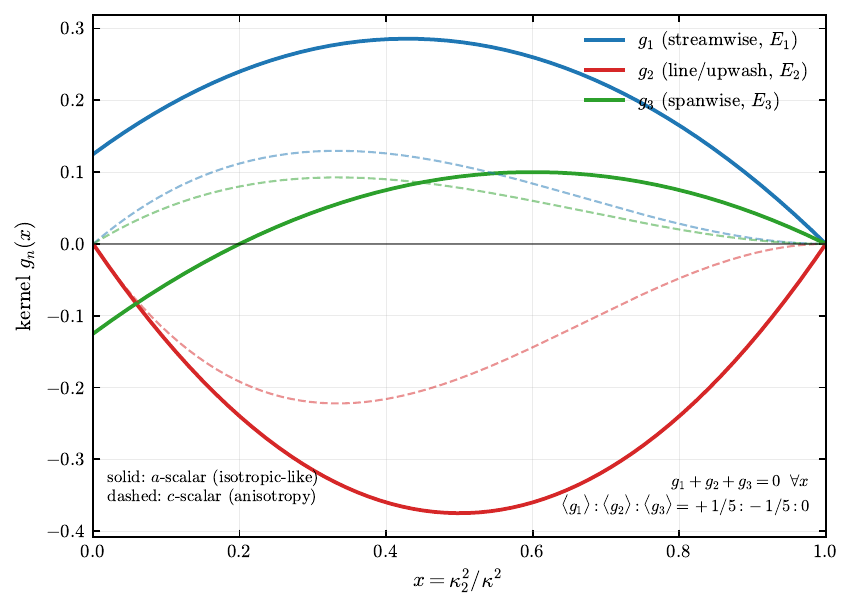}
  \caption{Rapid pressure--strain closure kernels $g_1,g_2,g_3$
  (eqs.~\eqref{eq:g1}--\eqref{eq:g3}) against the angular variable
  $x=\kappa_2^2/\kappa^2$. Solid curves: coefficient of the isotropic-like
  spectral scalar $a$; faint dashed curves: coefficient of the anisotropy
  scalar $c$. The kernels sum to zero at every $x$ (energy conservation,
  eq.~\eqref{eq:gtrace}) and vanish at $x=1$. Averaged over an isotropic
  spectrum they give the ratio $+\tfrac15:-\tfrac15:0$
  (eq.~\eqref{eq:giso}), recovering the isotropisation-of-production form
  $\Pi^{(\mathrm r)}_{ij}=\tfrac45 k_t S_{ij}$ \citep{Crow1968} that underlies
  the moment closure used in \S\,\ref{sec:val-spectrum}.}
  \label{fig:pi-kernels}
\end{figure}

\subsection{Consistency with the eddy redistribution: avoiding double
counting}\label{sec:doublecount}

A closed rapid term is necessary but not sufficient: it must not duplicate
redistribution already effected by the eddy events. The pressure--strain
correlation decomposes canonically into rapid and slow parts,
$\Pi_{ij}=\Pi^{(\mathrm r)}_{ij}+\Pi^{(\mathrm s)}_{ij}$, corresponding to the two source terms in \eqref{eq:poisson-rapid}. The slow part
$\Pi^{(\mathrm s)}_{ij}$ is the return-to-isotropy mechanism, quadratic in the
fluctuations and independent of the mean gradient; the eddy events in ODT,
which act on the fluctuating field without reference to $A_{kl}$, are the
model's representation of exactly this slow redistribution. The rapid part
$\Pi^{(\mathrm r)}_{ij}$, proportional to $A_{kl}$ and linear in the
fluctuations, is by construction absent from the eddy mechanism, which carries
no dependence on the mean strain. The two therefore occupy disjoint terms of
the canonical decomposition, and adding \eqref{eq:Piclosed} as an explicit
mean-strain-driven source alongside the eddies does not double-count, provided
the eddy events are calibrated to reproduce the slow return-to-isotropy in the
absence of mean strain---a condition that the no-strain baseline of
\S\,\ref{sec:spec-bvc} verifies directly, since there the centroid relaxes
under the eddies alone with no spurious rapid contribution. This argument establishes consistency at the level of the decomposition; the no-strain baseline of \S\,\ref{sec:spec-bvc} provides the direct numerical confirmation, the centroid there relaxing under the eddy mechanism alone.

\subsection{Summary}\label{sec:gate-summary}

The rapid pressure--strain term, exactly a nonlocal three-dimensional integral
\eqref{eq:Mijkl} of the full spectrum, collapses under axisymmetry of the
undistorted inflow about the resolved direction (Assumption A1) to the closed,
single-line functional \eqref{eq:Piclosed} of the resolved spectra and the
mean strain. The collapse is exact for the isotropic inflow and controlled to
$O(b^2)$ in the strain-generated anisotropy, with the error growing as the
accumulated strain renders the spectrum anisotropic about the line. The
resulting rapid source is consistent with---and disjoint from---the slow
redistribution carried by the eddy events, so the strain-coupled model is free
of double counting at the level of the rapid/slow decomposition. The closure
therefore holds on a single line within a stated and physically grounded
limit, rather than universally; characterising its breakdown at large accumulated strain, where the gust anisotropy is no longer moderate, remains for future work.

\section{Conclusions}\label{sec:conclusions}
Standard one-dimensional turbulence carries no mechanism for the action of an
imposed mean strain on the fluctuating field: its pressure treatment models
only the slow, turbulence-driven return to isotropy, not the rapid,
mean-strain-driven redistribution that governs the distortion of turbulence
as it approaches a leading edge. This paper has presented a strain-coupled
formulation that supplies the missing physics while preserving the structural
foundations of the model. The mean-strain production enters as a continuous,
pointwise forcing of the line velocity, which we proved preserves the
measure-preservation, continuity, scale-locality and kernel identities of the
triplet-map mechanism (Proposition~\ref{prop:admissibility}); the rapid
pressure--strain enters as an energy-conserving redistribution operator,
reconstructed at each step from the running Reynolds stress through a Lyapunov
equation and constructed to reproduce the homogeneous rapid-distortion
evolution of the component energies exactly under an explicit spectral closure
(Proposition~\ref{prop:rdt}); and the kinematic compression of scales is
carried by a consistent dilatation of the ODT domain.

Two questions framed the contribution. The first---whether the distorted
spectrum departs measurably from the rigid translation that linear
rapid-distortion theory predicts---is answered in the affirmative by
\S\,\ref{sec:val-spectrum}: at the Reynolds number of the target inflow the
strain-coupled model produces a broadband distorted spectrum whose
energy-weighted centroid migrates to $\approx2.4$ by a total strain of four,
far below the rigid-translation value of $7.4$, a difference of spectral shape
that a frozen, linearly strained input cannot represent. The second---whether
the rapid pressure--strain, exactly a nonlocal three-dimensional integral of
the full spectrum, admits a consistent representation on a single
one-dimensional line---is answered in \S\,\ref{sec:gate}: under the single,
physically grounded assumption that the undistorted inflow is axisymmetric
about the line, the integral collapses to a closed, computable kernel of the
resolved spectra and the mean strain, exact in the isotropic limit and
controlled to second order in the strain-generated anisotropy, and disjoint
from the slow redistribution carried by the eddy events. The closure thus
holds on a single line within a stated limit rather than universally, and that
limit is the regime the leading-edge problem occupies.

The formulation was verified against analytical rapid-distortion theory, which
it reproduces exactly at the onset of straining with no adjustable constant,
and validated against two external benchmarks: the strained-turbulence
experiment of \citet{ChenMeneveauKatz2006}, whose rapid-distortion limit the
model reproduces exactly and whose full-model spectral anisotropy it captures
qualitatively, and the direct numerical simulations of \citet{LeeReynolds1985},
against which the model reproduces the production-governed Reynolds-stress
anisotropy quantitatively. Where the model departs from the latter---in the
unstrained-direction anisotropy under plane strain---the departure was shown to
lie outside the reach of rapid-distortion theory and of the entire
second-moment-closure family, so that the model performs at the level of the
best available closure on this benchmark.

The output of the model is the distorted, anisotropic upwash spectrum at the
leading-edge plane, delivered with full scale resolution at a cost---some $300$
core-hours for the resolved $N=200$ ensemble, roughly two orders of magnitude
below an equivalent three-dimensional simulation---that makes parametric study
across turbulence intensity, length scale and Reynolds number practical.
Supplying that
spectrum to the leading-edge response of \citet{Amiet1975}, in place of the
prescribed isotropic input of the classical prediction chain, is the natural
application of the present development and is the subject of a companion paper;
the remaining ingredients of a complete leading-edge-noise framework---the
kinematic surface blocking that suppresses the wall-normal component near the
surface, and the spanwise coherence of the distorted field---are the principal
directions for future work.

\section{Acknoledgement}
The author thanks Marten Klein (Brandenburg
University of Technology Cottbus--Senftenberg) and Lorna~J.~Ayton (Department of
Applied Mathematics and Theoretical Physics, University of Cambridge) for
valuable discussions.

\section{Declaration of the use of AI tools}The author used a large
language model, Apertus-8B-Instruct-2509 \citep{swissai2025apertus} (Swiss AI
Initiative; hosted on the Blablador service of the J\"ulich Supercomputing
Centre, Helmholtz Association; \url{https://huggingface.co/swiss-ai/Apertus-8B-Instruct-2509};
accessed July 2026), for English-language grammar and phrasing checks of the
author's own draft text. The tool was not used to generate scientific content,
to produce or analyse data, or to generate figures; all derivations, numerical
results, figures and interpretations are the author's own. Accountability for
the final text rests with the author, who has checked the manuscript, and in
particular the references, for any unintended consequences of this use.

\bibliographystyle{jfm}
\bibliography{jfm}

\end{document}